\newcommand*{\circled}[1]{\lower.7ex\hbox{\tikz\draw (0pt, 0pt)
    circle (.5em) node {\makebox[1em][c]{\small #1}};}}
\let\Pr\relax
\DeclareMathOperator*{\Pr}{\mathrm{Pr}}
\DeclareMathOperator*{\E}{\mathbb{E}}
\newcommand{\are}{P_f(\widetilde{A})_{\textnormal{Re}}}
\newcommand{\aim}{P_f(\widetilde{A})_{\textnormal{Im}}}
\newcommand{\bre}{P_f(\widetilde{B})_{\textnormal{Re}}}
\newcommand{\bim}{P_f(\widetilde{B})_{\textnormal{Im}}}
\newcommand{\Rpi}[1]{\mathrm{e}^{\,\mathrm{i} #1 (2\Pi - I)}}
\newcommand{\Rpit}[1]{\mathrm{e}^{\,\mathrm{i} #1 (2\widetilde{\Pi} - I)}}
\newcommand{\cbA}{C\text{-}\BE_{P_f(\widetilde{A})}}
\newcommand{\cbB}{C\text{-}\BE_{Q_g(\widetilde{B})}}
\newcommand{\rank}{\mathrm{rank}}
\newcommand{\Var}{\mathbf{Var}}
\newcommand{\SWAP}{\mathrm{SWAP}}
\newcommand{\UC}{U_{\textnormal{copy}}}
\newcommand{\mi}{\mathrm{i}}
\newcommand{\me}{\mathrm{e}}
\newcommand{\BE}{\mathrm{BE}}
\newcommand{\PE}{\mathrm{PE}}
\DeclareMathOperator{\Tr}{Tr}
\newcommand{\ketbra}[2]{\ket{#1}\!\bra{#2}}
\newtheorem{theorem}{Theorem}
\newtheorem{corollary}[theorem]{Corollary}
\newtheorem{lemma}{Lemma}
\newtheorem{definition}{Definition}
\begin{document}

\preprint{APS/123-QED}

\title{Distributed quantum algorithm for divergence estimation and beyond}

\author{Honglin Chen$^1$, Wei Xie$^{1,*}$, Yingqi Yu$^1$, Hao Fu$^1$}
\author{Xiang-Yang Li$^{1,2, }$}

\thanks{Corresponding authors: xxieww@ustc.edu.cn, xiangyangli@ustc.edu.cn}

\affiliation{$1$ School of Computer Science and Technology, University of Science and Technology of China, Hefei 230027, China}
\affiliation{$2$ Hefei National Laboratory, University of Science and Technology of China, Hefei 230088, China}

\date{\today}

\begin{abstract}
Distributed quantum algorithms offer a promising route toward scalable quantum information processing, particularly given the current limitations of hardware. Existing approaches, however, often depend on costly global operations. In particular, no general method has been known for computing $\Tr(f(A)g(B))$ using only local operations, classical communication, and single-qubit measurements. We introduce the first distributed framework that achieves this task by combining quantum singular value transformation with a Hadamard test. The algorithm computes $\Tr\!\left(f(A)g(B)\right)$ within additive error $\varepsilon$ using $\widetilde{O}\!\left(d^2/(\delta \varepsilon^2)\right)$ queries, assuming that the minimum singular values of $A,B \in \mathbb{C}^{d\times d}$ are at least $\delta$. We also prove a lower bound of $\Omega\!\left(\max\left\{r/\varepsilon^3,\; d^{1/2}r^{3/2}/\varepsilon^2\right\}\right)$, where $r$ is the rank of the inputs. This framework enables protocols for quantum divergence estimation, distributed linear-system solving, and Hamiltonian simulation under realistic constraints, and provides a foundation for future distributed quantum algorithms.
\end{abstract}

\maketitle

\section{Introduction}
Recent advances in quantum hardware have demonstrated signatures of quantum advantage across diverse platforms~\cite{wu2021strong, zhong2020quantum, bluvstein2024logical, guo2024site}. Yet present devices remain limited in scale and connectivity, motivating distributed approaches that reduce reliance on costly quantum communication and expensive entanglement. 
In this work, we study a distributed setting where two parties, Alice and Bob, have quantum access to matrices $A,B \in \mathbb{C}^{d\times d}$ and aim to compute $\Tr\!\left(f(A)\,g(B)\right)$ using only local quantum operations, classical communication, and single-qubit measurements, for predefined functions $f$ and $g$.
This task generalizes cross-platform verification protocols~\cite{elben2020cross, anshu2022, qian2024multimodal, knorzer2023cross}, which typically estimate overlaps such as $\Tr(\rho\sigma)$ when $A$ and $B$ correspond to quantum states $\rho$ and $\sigma$. 

Our framework goes beyond verification: distributed estimation of $\Tr\!\left(f(A)\,g(B)\right)$ enables collaborative tasks such as distributed linear-system solving and short-time Hamiltonian simulation under the same LOCC and measurement restrictions. To the best of our knowledge, this is the first distributed algorithm that estimates such nonlinear matrix traces with a rigorous complexity analysis. Specifically, under the assumption that the smallest singular values of $A$ and $B$ are at least $\delta$, our protocol requires only 
$\widetilde{O}\!\left(d^2/(\delta \varepsilon^2)\right)$ quantum queries and two-qubit gates.

The starting point of our work follows the sampling-access approach of Anshu et al.~\cite{anshu2022}, where overlaps such as $\Tr(\rho\sigma)$ can be estimated by applying a shared Haar random unitary to quantum states and then measuring in the computational basis. This method is effective in the sampling model 
but does not extend naturally to settings that require polynomial matrix transformations.  
 
To enable such transformations, we adopt the \emph{block-encoding model}~\cite{gilyen2019, zhang2024circuit}, 
which embeds a matrix into the top-left block of a larger unitary so that QSVT~\cite{gilyen2019, martyn2021grand} 
can implement polynomial transformations. Block-encoding has powered advances in quantum learning and estimation~\cite{wang2024new, gilyen2020distributional, gur2021sublinear, subramanian2021quantum, wang2023quantum, zhang2025heisenberg}, 
linear-system solving~\cite{patterson2025measurement, gribling2024optimal, lapworth2024evaluation}, 
and Hamiltonian simulation~\cite{dong2022quantum, toyoizumi2024hamiltonian, higuchi2024quantum}. 
Its role in distributed settings, however, is far less understood.  

A central challenge is that block-encoding access behaves as a black box: only the top-left block encodes the target matrix, 
while the other blocks are unknown. Without quantum communication, measurement outcomes reflect both the desired block 
and inaccessible components, introducing systematic bias. As a result, direct extensions of sampling-based methods fail 
because contributions from irrelevant subspaces contaminate the estimate.  
For example, consider a $2\times 2$ block-encoding unitary
\begin{equation}
U = 
\begin{bmatrix}
u_{00} & u_{01} \\
u_{10} & u_{11}
\end{bmatrix},\notag
\end{equation}
where the top-left entry $u_{00}$ is the quantity of interest. If $U$ acts on $\ket{0}$ and the outcome 
is measured in the $Z$ basis, the probability of obtaining $0$ is $\Pr(0)=|u_{00}|^2$, which is relevant, 
while the probability of obtaining $1$ is $\Pr(1)=|u_{10}|^2$, which is unrelated yet unavoidable. This simple case illustrates how residual blocks contaminate measurement statistics.

We overcome this limitation by combining the Hadamard test with Haar-random unitaries restricted to the top-left block~\cite{anshu2022, gilyen2022improved, subramanian2021quantum}. This construction isolates 
the relevant block, cancels contributions from irrelevant subspaces, and allows us to recover an unbiased estimate of $\Tr(f(A)g(B))$ with provable guarantees.

As a comparison, when $A$ and $B$ are density matrices, quantum state tomography (QST) can in principle evaluate $\Tr(f(\rho) g(\sigma))$ using sampling access,~\cite{anshu2024survey, haah2016sample, o2016efficient}. However, QST requires entangled multi-copy measurements, which are incompatible with our framework, and involves expensive classical post-processing to compute matrix functions. An alternative approach~\cite{chen2022tight} avoids entanglement but incurs a sample complexity of $\Theta(d^3/\varepsilon^2)$, making it impractical for large-scale systems.

Another method block-encodes a diagonal density matrix using a controlled state-preparation unitary, allowing QST to produce an estimate $\hat{\rho}$ with $\|\hat{\rho} - \rho\|_{\mathrm{tr}} \leq \varepsilon$ after $\widetilde{O}(dr/\varepsilon)$ queries to the unitary and its inverse~\cite{van2023quantum}. Once classical approximations of the density matrices are obtained, matrix-function evaluations can be performed offline to yield cross-platform estimates. While this method reduces query complexity, it requires $\widetilde{O}(d^{3.5}/\varepsilon)$ two-qubit gates~\cite[Theorem~45]{van2023quantum} and substantial classical post-processing for large $d$.

In contrast, our framework requires only $\widetilde{O}(d^2/(\delta \varepsilon^2))$ two-qubit gates, and utilizes a simple circuit structure suitable for near-term hardware. 
While sampling access can be transformed into block-encoding via density matrix exponentiation~\cite{gilyen2022improved}, this transformation incurs significant overhead and introduces a $\delta$-bias, limiting the accuracy of divergence estimates for non-homogeneous functions such as $\ln x$ (Corollary~\ref{cor4}).
For these reasons, we do not adopt this transformation in the distributed setting.

These comparisons highlight the advantages of our approach for large-scale, realistic implementations, and motivate the framework introduced in this work.

\medskip

Our framework supports several key applications~\cite{goldreich1998secure, du2001secure}:
\begin{enumerate}
  \item Quantum divergence estimation~(\ref{diver}): The framework enables estimation of quantum divergences between two unknown states $\rho$ and $\sigma$, including the quantum relative entropy and $\alpha$-R\'enyi entropy. These quantities are central to state discrimination and hypothesis testing. Previous studies either rely on variational methods without explicit complexity bounds~\cite{lu2025estimating}, or assume that one state is known~\cite{hayashi2025measuring}. In contrast, we provide the first distributed algorithm with rigorous guarantees when both states are unknown. For $\alpha>1$, divergences can be estimated within additive error $\varepsilon$ using $\widetilde{O}\!\left(d^2/(\delta \varepsilon^2)\right)$ block-encoding queries. For $\alpha<1$, the same guarantees hold with $\widetilde{O}\!\left(d^2 r^{1/\bar\alpha}/\varepsilon^{2+1/\bar\alpha}\right)$ queries, where $r=\max\{\rank(\rho),\rank(\sigma)\}$ and $\bar\alpha=\min\{\alpha,1-\alpha\}$.

  \item Secure distributed linear-system solving~(\ref{linear}): 
  The framework provides the first distributed quantum algorithm for solving linear systems with a rigorous complexity analysis. Unlike the HHL algorithm~\cite{harrow2009quantum}, which outputs only expectation values, our method returns the full solution vector. It requires $\widetilde{O}\!\left(d^3/(\delta \varepsilon^2)\right)$ queries to entry-access oracles for $A$ and $b$. Moreover, the protocol is secure: even when Alice holds the matrix $A$ and Bob holds the vector $b$, both parties can jointly solve $Ax=b$ without disclosing their private inputs.

  \item Short-time distributed Hamiltonian simulation~(\ref{hamil}): Distributed simulation is widely regarded as a near-term application of quantum computing~\cite{mohseni2024build}. Existing methods often require quantum communication~\cite{feng2024distributed}, while classical communication approaches rely on circuit knitting~\cite{harrow2025optimal}, whose cost grows with the degree of coupling and may become exponential. Our framework introduces a new LOCC-based approach in which the error depends only on the commutativity of the local Hamiltonians. Specifically, consider $H = H_1 + H_2$ where $H_1$ and $H_2$ are held by separate parties. Given block-encodings of a normalized observable $M$ and an initial state $\rho_{\mathrm{init}}$, and setting $t=O(1/\sqrt{d})$, our algorithm estimates $\Tr\!\left(M \me^{-\mi H t}\rho_{\mathrm{init}} \me^{\mi H t}\right)$ within error $O\!\left(\|[H_1,H_2]\|\right)$ using only LOCC. If $H_1$ and $H_2$ commute~\cite{gottesman1997stabilizer, bravyi2005commutative}, the error further improves to additive $\varepsilon$.
\end{enumerate}

Finally, we establish a lower bound on the query complexity of any LOCC protocol—allowing adaptive measurements and arbitrary rounds of classical communication—that estimates $\Tr\!\left(f(A)g(B)\right)$ to additive error $\varepsilon$. Such a protocol must make at least $\Omega\!\left(\max\left\{\tfrac{r}{\varepsilon^3},\; \tfrac{d^{1/2}r^{3/2}}{\varepsilon^2}\right\}\right)$ queries to the block-encoding oracles. When the matrices are full rank and $d$ is large, our upper and lower bounds coincide up to logarithmic factors and the $\delta$ parameter, showing that the proposed distributed framework is essentially near-optimal.

In the remainder of this section, we outline the techniques underlying our main results (Sec.~\ref{1B}), review related work (Sec.~\ref{1C}), and provide further discussion (Sec.~\ref{1D}).

\subsection{Technique overview}
\label{1B}

Our algorithm for estimating $\Tr(f(A)g(B))$ proceeds in two stages. In the first stage, we embed polynomial transformations of $A$ and $B$ into Hadamard test circuits via controlled QSVT. In the second stage, we construct an unbiased classical estimator for $\Tr(f(A)g(B))$ from the measurement outcomes of the ancillary qubits~\cite{huang2020predicting, anshu2022}. The overall flow is illustrated in Fig.~\ref{fig1}, with detailed procedures provided in Sec.~\ref{sec3}. 

\medskip
\noindent Step 1: Quantum circuit execution.  
The Hadamard test is initialized with $\ket{0}$ states. Each party applies a Haar-random unitary to its register, followed by a controlled QSVT circuit that maps the top-left block of the block-encoding matrix to a polynomial approximation of the target function ($P_f$ for $A$ and $Q_g$ for $B$). Measuring the ancillary qubit in the $Z$ basis produces a classical bit string. This procedure is formalized in Algorithm~\ref{alg1}.

\medskip
\noindent Step 2: Classical post-processing.  
The classical data from Step~1 are used to construct an unbiased estimator of $\Tr(P_f(\widetilde{A})Q_g(\widetilde{B}))$, where $\widetilde{A}$ and $\widetilde{B}$ denote the top-left blocks of the block-encodings. Under suitable conditions on the input block-encodings, this estimator yields an approximation of $\Tr(f(A)g(B))$. The details are provided in Algorithm~\ref{alg2}.

\begin{figure*}[htbp]
    \centering
    \includegraphics[width=0.95\textwidth]{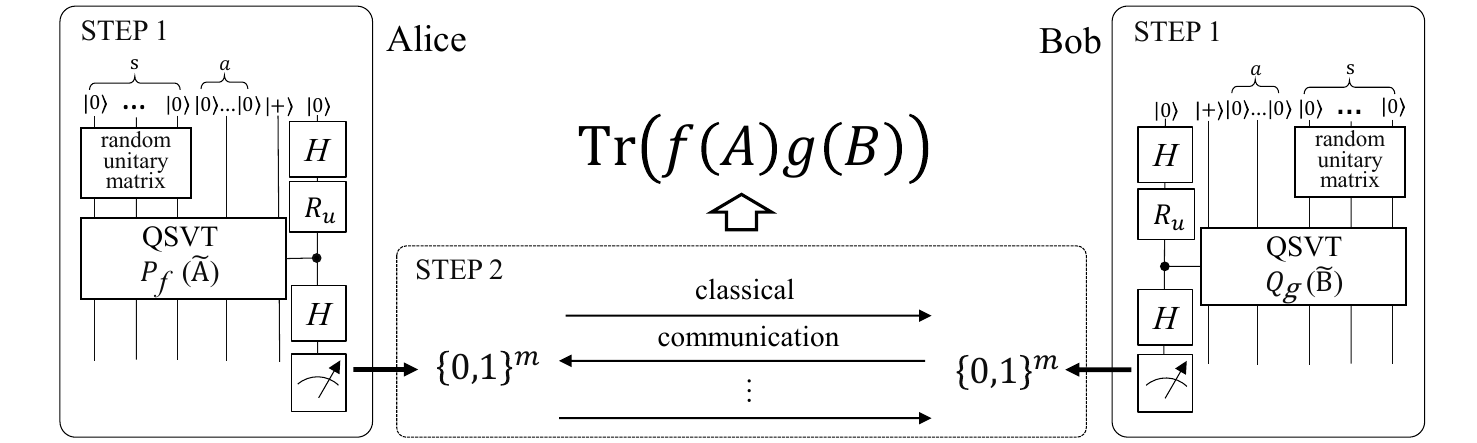}
    \caption{Flow diagram of the algorithm. 
    \textbf{Step 1: Quantum circuit execution.} Alice and Bob independently apply Haar-random unitaries, then construct Hadamard test circuits with QSVT transformations of their block-encoding matrices. The polynomials $P_f$ and $Q_g$ approximate the target functions $f$ and $g$. 
    \textbf{Step 2: Classical post-processing.} Measurement outcomes of the ancillary qubits are used to estimate $\Tr(f(A)g(B))$.}
    \label{fig1}
\end{figure*}

\subsection{Related works}
\label{1C}

Cross-platform verification has been widely studied~\cite{zhu2022cross, elben2020cross, anshu2022, hinsche2024efficient, carrasco2021theoretical, qian2024multimodal, zheng2024cross, knorzer2023cross, wu2024contractive}. These protocols typically focus on estimating overlaps between density matrices. In contrast, our framework addresses the more general problem of applying non-linear matrix transformations, which are not restricted to density operators.

Distributed quantum computing has also been investigated beyond classical communication, allowing for limited quantum communication. Recent studies examine both theoretical aspects~\cite{jin2024distributed, feng2024distributed} and experimental realizations~\cite{liu2024nonlocal, main2025distributed}. For a comprehensive overview, see the survey in Ref.~\cite{caleffi2024distributed}.

Quantum divergence estimation is closely related to property testing of quantum states, where significant progress has been made on estimating fidelity, entropy, and trace distance using QSVT-based methods under purification and sampling access models~\cite{wang2024new, gilyen2022improved, gilyen2020distributional, subramanian2021quantum, wang2023quantum}. Variational approaches have also been proposed for estimating divergences~\cite{lu2025estimating}. Our work differs in that it provides the first distributed algorithm with explicit complexity guarantees under LOCC constraints. 

Together, these directions highlight the growing importance of distributed protocols for quantum information processing and underscore the need for general frameworks with rigorous complexity guarantees.

\subsection{Discussion}
\label{1D}

We have presented the first distributed quantum framework for estimating expressions of the form $\Tr(f(A)g(B))$ using only local quantum operations, single-qubit measurements, and classical communication. This setting combines three desirable features: 
(i) generality, as it applies beyond cross-platform verification to tasks such as divergence estimation, linear-system solving, and Hamiltonian simulation; 
(ii) feasibility, enabled by modest hardware requirements through simple Hadamard-test circuits; 
and (iii) efficiency, with rigorous query-complexity bounds that match or nearly match known lower bounds. 

The techniques developed here may serve as building blocks for broader tasks, such as extending to more general functions $f,g$, reducing classical communication overhead, and adapting to noisy intermediate-scale devices. The protocol also ensures that distributed computations can be performed 
without revealing local inputs, adding a layer of security to collaborative scenarios.

Overall, this work represents a first step toward distributed quantum algorithms under strict locality and communication constraints. By demonstrating that meaningful quantum advantages can be retained in this setting, our results highlight promising directions for future research. 
We expect this framework to stimulate advances in distributed quantum computation and to inspire hybrid quantum–classical protocols in both two-party and multi-party settings. Several open directions include:

\begin{enumerate}
\item Closing the complexity gap: Our query complexity leaves a gap between upper and lower bounds. Since the method relies on $Z$-basis measurements, further reducing the exponent in $d$ is challenging. Inspired by Ref.~\cite{van2023quantum}, which uses purification matrices for QST, shifting complexity dependence from dimension $d$ to rank $r$ and refining lower bounds may yield more efficient algorithms.

\item Extending to broader problems: The framework could be generalized to compute expressions of the form $\sum_i \Tr(f_i(A)g_i(B))$, enabling applications to distributed divergences, fidelity measures, nonlinear matrix equations~\cite{liu2024quantum}, and related tasks.

\item Leveraging limited quantum communication: Recent results on distributed inner product estimation with constrained quantum communication~\cite{arunachalam2024distributed} suggest that allowing modest quantum communication could further improve the efficiency of our protocol.
\end{enumerate}

\subsection{Organization}
The remainder of the paper is structured as follows. Section~\ref{sec2} reviews the necessary background on the QSVT algorithm. Section~\ref{sec3} introduces the distributed framework and outlines the technical contributions. Section~\ref{sec4} presents applications, including divergence estimation, linear-system solving, and Hamiltonian simulation. Section~\ref{sec5} establishes the lower bound for the distributed estimation task.

\section{Preliminaries}
\label{sec2}

\subsection{Quantum divergences and matrix access models}

For $d \in \mathbb{N}$, let $[d]=\{0,\ldots,d-1\}$ and $\mi=\sqrt{-1}$. A $2^s\times 2^s$ matrix is called an $s$-qubit matrix; an $s$-qubit density matrix is a positive semidefinite $s$-qubit matrix with unit trace. If a matrix has dimension $m\times n$ with $m,n\le 2^s$, we embed it into a $2^s\times 2^s$ matrix whose top-left block is the original matrix and whose remaining entries are zero. For simplicity, we henceforth assume matrices have dimension $2^s\times 2^s$.

We use $\|\cdot\|$ to denote the spectral norm of matrices and the supremum norm of functions, and $\|\cdot\|_{\mathrm{tr}}$ to denote the trace norm, defined as $\|A\|_{\mathrm{tr}}=\Tr\!\left(\sqrt{A^\dagger A}\right).$
For $A\in\mathbb{C}^{m\times n}$, let $\zeta(A)$ denote its singular values, ordered as $\zeta_1(A)\ge \zeta_2(A)\ge\cdots\ge \zeta_{\min\{m,n\}}(A)\ge 0$. We use $\widetilde{O}(\cdot)$ to hide polylogarithmic factors.

\medskip
\noindent Quantum divergences: Quantum relative entropy extends the classical Kullback–Leibler divergence to density matrices.

\begin{definition}[Quantum relative entropy~\cite{vedral2002role}]
\label{def2}
For density matrices $\rho,\sigma\in\mathbb{C}^{d\times d}$ with $\operatorname{supp}\rho\subseteq \operatorname{supp}\sigma$, the quantum relative entropy is
$$
D(\rho\|\sigma)=-\Tr(\rho\ln\sigma)+\Tr(\rho\ln\rho).
$$
\end{definition}

\begin{definition}[Quantum $\alpha$-R\'enyi relative entropy~\cite{petz1986quasi}]
\label{def3}
For $\alpha>0$, $\alpha\neq 1$, the quantum $\alpha$-R\'enyi relative entropy of $\rho,\sigma\in\mathbb{C}^{d\times d}$ is
$$
D_\alpha(\rho\|\sigma)=\frac{1}{\alpha-1}\ln\Tr(\rho^\alpha\sigma^{1-\alpha}).
$$
\end{definition}

In the limit $\alpha\to 1$, we recover the quantum relative entropy:  
$\lim_{\alpha\to 1}D_\alpha(\rho\|\sigma)=D(\rho\|\sigma)$.

\medskip
\noindent Block-encoding and matrix access: We adopt the block-encoding framework for representing matrices in quantum circuits.

\begin{definition}[Block-encoding~\cite{gilyen2019}]
\label{def4}
Let $A$ be an $s$-qubit matrix, $\beta,\varepsilon\ge 0$, and $a\in\mathbb{N}$. An $(s+a)$-qubit unitary $\BE_A(\beta,a,\varepsilon)$ is a $(\beta,a,\varepsilon)$-block-encoding of $A$ if
$$
\bigl\|A-\beta(\bra{0}_a\otimes I_s)\BE_A(\beta,a,\varepsilon)(\ket{0}_a\otimes I_s)\bigr\|\le \varepsilon.
$$
\end{definition}

It follows that $\|A\|\le \beta+\varepsilon$ since $\|\BE_A(\beta,a,\varepsilon)\|=1$. We write $\BE_A$ for an exact $(1,a,0)$-block-encoding of $A$. More generally, $(\beta,a,0)$-block-encodings are called \emph{exact} and $(1,a,\varepsilon)$ encodings \emph{approximate}. For density matrices, one can obtain $\BE_\rho$ via purified access~\cite{gur2021sublinear}. If entries of $A$ are individually accessible, we instead use matrix entry access:

\begin{definition}[Matrix entry access]
\label{def5}
A $2^s\times 2^s$ matrix $A$ has matrix entry access if there exists an oracle $O_A$ such that
$$
O_A\ket{0}\ket{i}\ket{j}=\Bigl(a_{ij}\ket{0}+\sqrt{1-|a_{ij}|^2}\ket{1}\Bigr)\ket{i}\ket{j},
$$
where $\|A\|\le 1$ and $a_{ij}$ denotes the entries of $A$.
\end{definition}

With matrix entry access, one can construct a $(2^s,s+1,0)$-block-encoding of $A$~\cite{camps2022fable}.

\subsection{Quantum singular value transformation}
\label{qsvt}

Many distributed estimation tasks require nonlinear matrix transformations, for which we rely on quantum singular value transformation (QSVT)~\cite{gilyen2019}. Given a function $f$, QSVT enables the implementation of a polynomial $P_f$ that approximates $f$, thereby producing $P_f(A)$ by acting on the singular values of $A$.

\begin{definition}[{\cite[Definition~16]{gilyen2019}}]
\label{def7}
Let $f:\mathbb{R}\to\mathbb{C}$ be an even or odd function. Suppose $A\in\mathbb{C}^{\widetilde{d}\times d}$ has singular value decomposition
$$
A=\sum_{i=1}^{d_{\min}}\zeta_i \ketbra{\widetilde{\tau}_i}{\tau_i}, \quad d_{\min}=\min\{d,\widetilde{d}\}.
$$
The polynomial singular value transformation of $A$ is defined as
$$
f^{(\mathrm{SV})}(A)=\sum_{i=1}^{d_{\min}} f(\zeta_i)\ketbra{\widetilde{\tau}_i}{\tau_i}, \quad \text{if $f$ is odd,}
$$
and
$$
f^{(\mathrm{SV})}(A)=\sum_{i=1}^d f(\zeta_i)\ketbra{\tau_i}{\tau_i}, \quad \text{if $f$ is even,}
$$
where we set $\zeta_i=0$ for $i\in[d]\setminus[d_{\min}]$.
\end{definition}

QSVT implements such polynomial transformations efficiently through phase-modulated quantum circuits, allowing approximation of $f(\zeta_i)$ for each singular value $\zeta_i$.

\medskip

\begin{theorem}[{\cite[Corollary~18]{gilyen2019}}]
\label{thm1}
Let $\mathcal{H}_U$ be a finite-dimensional Hilbert space, and let $U,\Pi,\widetilde{\Pi}\in \mathrm{End}(\mathcal{H}_U)$ with $U$ unitary and $\Pi,\widetilde{\Pi}$ orthogonal projectors. Suppose $P\in\mathbb{R}[x]$ is a degree-$n$ polynomial such that
\begin{itemize}
    \item $P$ has parity $n\bmod 2$, and
    \item $\|P(x)\|_{[-1,1]}\le 1$.
\end{itemize}
Then there exists a tuple of phases $\Phi=(\Phi_1,\ldots,\Phi_n)\in\mathbb{R}^n$ such that
\begin{align*}
   &P^{(\mathrm{SV})}(\widetilde{\Pi}U\Pi)= \\
&\begin{cases}
(\bra{+}\otimes\widetilde{\Pi})(\ketbra{0}{0}\otimes U_\Phi + \ketbra{1}{1}\otimes U_{-\Phi})(\ket{+}\otimes \Pi), & n \text{ odd},\\[4pt]
(\bra{+}\otimes \Pi)(\ketbra{0}{0}\otimes U_\Phi + \ketbra{1}{1}\otimes U_{-\Phi})(\ket{+}\otimes \Pi), & n \text{ even},
\end{cases} 
\end{align*}

where
\begin{equation}
U_\Phi =
\begin{cases}
\begin{aligned}[t]
& \me^{\mi\Phi_1(2\widetilde{\Pi}-I)} U 
   \prod_{l=1}^{(n-1)/2} 
   \bigl(\me^{\mi\Phi_{2l}(2\Pi-I)} U^\dagger 
   \me^{\mi\Phi_{2l+1}(2\widetilde{\Pi}-I)} U\bigr), 
   && n \ \text{odd}, 
\end{aligned} \\[6pt]
\begin{aligned}[t]
& \prod_{l=1}^{n/2} 
   \bigl(\me^{\mi\Phi_{2l-1}(2\Pi-I)} U^\dagger 
   \me^{\mi\Phi_{2l}(2\widetilde{\Pi}-I)} U\bigr), 
   && n \ \text{even}. \notag
\end{aligned}
\end{cases}
\end{equation}
\end{theorem}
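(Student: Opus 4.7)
The plan is to reduce the singular value transformation to a collection of two-dimensional problems on invariant subspaces, apply the one-dimensional Quantum Signal Processing (QSP) theorem on each such subspace, and finally handle the real/imaginary parts by the Hadamard-controlled combination of $U_\Phi$ and $U_{-\Phi}$. Concretely, I would start from the singular value decomposition $\widetilde{\Pi}U\Pi=\sum_i \zeta_i\ket{\widetilde{\tau}_i}\bra{\tau_i}$ and identify the two-dimensional subspaces $\mathcal{H}_i=\mathrm{span}\{\ket{\tau_i},\ket{\tau_i^{\perp}}\}$ and $\widetilde{\mathcal{H}}_i=\mathrm{span}\{\ket{\widetilde{\tau}_i},\ket{\widetilde{\tau}_i^{\perp}}\}$, where $\ket{\tau_i^\perp}:=(I-\widetilde{\Pi})U\ket{\tau_i}/\sqrt{1-\zeta_i^2}$ and analogously for $\ket{\widetilde{\tau}_i^\perp}$. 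One checks that $U$ maps $\mathcal{H}_i$ into $\widetilde{\mathcal{H}}_i$ as the rotation with matrix $\begin{pmatrix}\zeta_i & \sqrt{1-\zeta_i^2}\\ \sqrt{1-\zeta_i^2}&-\zeta_i\end{pmatrix}$ in the chosen bases, which is the qubitization step underlying QSVT.

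Next I would analyze $U_\Phi$ on a single pair $(\mathcal{H}_i,\widetilde{\mathcal{H}}_i)$. The reflections $2\Pi-I$ and $2\widetilde{\Pi}-I$ act as $Z$ in these bases, so $\me^{\mi\Phi_j(2\Pi-I)}$ and $\me^{\mi\Phi_j(2\widetilde{\Pi}-I)}$ are single-qubit $Z$-rotations. Alternating them with $U$ and $U^\dagger$ realises on each invariant subspace the standard QSP sequence in $x=\zeta_i$. The one-dimensional QSP theorem (e.g.\ \cite[Theorem 3]{gilyen2019}) then asserts that for any real polynomial $P\in\mathbb{R}[x]$ of degree $n$, parity $(n\bmod 2)$ and $\|P\|_{[-1,1]}\le 1$, there exist $\Phi\in\mathbb{R}^n$ and an auxiliary polynomial $Q$ with $|P(x)|^2+(1-x^2)|Q(x)|^2=1$ such that the $(1,1)$ entry of the resulting SU(2) product equals $P(\zeta_i)+\mi\sqrt{1-\zeta_i^2}\,Q(\zeta_i)$. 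This is the main technical ingredient; invoking it is what makes the proof go through.

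From this, I would extract the real part via the Hadamard control trick. Observing that replacing $\Phi$ by $-\Phi$ conjugates all $Z$-rotations and hence complex-conjugates the relevant matrix entry, the operator $\ket{0}\bra{0}\otimes U_\Phi+\ket{1}\bra{1}\otimes U_{-\Phi}$ sandwiched between $\ket{+}$ and $\bra{+}$ on the control qubit yields $\tfrac{1}{2}(U_\Phi+U_{-\Phi})$ on the relevant block, whose corresponding entry is exactly $P(\zeta_i)$. Putting the pieces together on each invariant subspace and summing, one gets
\begin{align*}
(\bra{+}\otimes\widetilde{\Pi})\bigl(\ket{0}\bra{0}\otimes U_\Phi+\ket{1}\bra{1}\otimes U_{-\Phi}\bigr)(\ket{+}\otimes\Pi)=\sum_i P(\zeta_i)\ket{\widetilde{\tau}_i}\bra{\tau_i}=P(\widetilde{\Pi}U\Pi)
\end{align*}
for odd $n$, and an analogous identity with $\Pi$ on both sides for even $n$, after accounting for the fact that even-degree sequences return to $\mathcal{H}_i$ rather than mapping to $\widetilde{\mathcal{H}}_i$.

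The main obstacle is the QSP existence theorem itself: proving that every admissible real polynomial admits a valid phase sequence $\Phi$ requires a careful complementary-polynomial construction, typically by induction on $n$ using a Fejér-type factorisation of $1-P(x)^2$ into $(1-x^2)|Q(x)|^2$. The remaining steps—setting up invariant subspaces, verifying that reflections act as $Z$, and implementing the real-part extraction with the Hadamard-controlled combination—are essentially mechanical once the QSP theorem is in hand. The parity assumption is essential here: it ensures that the QSP sequence ends in a state that can be projected by $\widetilde{\Pi}$ (odd case) or $\Pi$ (even case) without picking up unwanted cross terms, which is precisely why the two formulas in the theorem differ only in the projector on the left.
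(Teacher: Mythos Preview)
Your outline is correct and matches the standard proof from \cite{gilyen2019}; note that the present paper does not give its own proof of this statement but simply quotes it as \cite[Corollary 18]{gilyen2019}, so there is nothing in the paper to compare against beyond the citation. One small slip: the vector you label $\ket{\tau_i^\perp}:=(I-\widetilde{\Pi})U\ket{\tau_i}/\sqrt{1-\zeta_i^2}$ actually lies in the range-side space and should be called $\ket{\widetilde{\tau}_i^\perp}$, with the domain-side complement defined analogously via $U^\dagger$; once these labels are swapped your invariant-subspace description and the rest of the argument go through exactly as in the cited reference.
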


\medskip
\noindent Extension to density matrices: We adapt Corollary~18 of Ref.~\cite{gilyen2019} to the case of block-encoding density matrices.

\begin{corollary}[Quantum eigenvalue transformation for density matrices]
\label{cor1}
Let $P(x)=\sum_{k=0}^n a_k x^k$ be an odd or even polynomial of degree $n$ with $\|P(x)\|_{[-1,1]}\le 1$. Let $\Pi=\widetilde{\Pi}=\ketbra{0}{0}_a\otimes I_s$. Then there exists $\Phi=(\Phi_1,\ldots,\Phi_n)\in\mathbb{R}^n$ such that
\begin{align*}
&P^{(\mathrm{EV})}(\widetilde{\Pi}\BE_\rho \Pi)=\\
&\begin{cases}
\begin{aligned}[t]
 \left(\bra{+}\otimes\widetilde{\Pi}\right)
  \left(\ketbra{0}{0}\otimes U_{\left(\Phi,\rho\right)} 
   + \ketbra{1}{1}\otimes U_{(-\Phi,\rho)}\right) \left(\ket{+}\otimes \Pi\right), \\
 n\ \text{odd},
\end{aligned}\\[6pt]
\begin{aligned}[t]
 \left(\bra{+}\otimes \Pi\right)
  \left(\ketbra{0}{0}\otimes U_{\left(\Phi,\rho\right)} 
   + \ketbra{1}{1}\otimes U_{(-\Phi,\rho)}\right)  \left(\ket{+}\otimes \Pi\right),\\
 n\ \text{even}. \notag
\end{aligned}
\end{cases}
\end{align*}

where
\begin{equation}
U_{(\Phi,\rho)}=
\begin{cases}
\begin{aligned}[t]
& \Rpit{\Phi_1}\,\BE_\rho
  \prod_{l=1}^{(n-1)/2}\!\Bigl(\Rpi{\Phi_{2l}}\,\BE_\rho^\dagger\\[-2pt]
& \qquad\qquad\qquad\qquad \Rpit{\Phi_{2l+1}}\,\BE_\rho\Bigr),
&& n\ \text{odd},
\end{aligned}\\[6pt]
\begin{aligned}[t]
& \prod_{l=1}^{n/2}\!\Bigl(\Rpi{\Phi_{2l-1}}\,\BE_\rho^\dagger\\[-2pt]
& \qquad\qquad\qquad\qquad \Rpit{\Phi_{2l}}\,\BE_\rho\Bigr),
&& n\ \text{even}. \notag
\end{aligned}
\end{cases}
\end{equation}
\end{corollary}

For a polynomial $P$ with phase sequence $\Phi$, the block-encoding 
$$
\ketbra{0}{0}\otimes U_{(\Phi,\rho)}+\ketbra{1}{1}\otimes U_{(-\Phi,\rho)}
$$
is a block-encoding of $P(\rho)$. This principle extends to general matrices $A$. Furthermore, a controlled version of the QSVT circuit can be obtained:

\begin{definition}[{\cite[Lemma~19]{gilyen2019}}]
\label{def12}
Given a block-encoding $\BE_{P(A)}$ from Theorem~\ref{thm1}, its controlled form is
\[
C\text{-}\BE_{P(A)} := \ketbra{0}{0}\otimes I + \ketbra{1}{1}\otimes \BE_{P(A)}.
\]
\end{definition}

\subsection{Low-degree polynomial approximations}

To enable distributed divergence estimation and linear-system solving, we approximate logarithms, positive powers, and reciprocals by low-degree polynomials compatible with QSVT. We rely on the following approximation results.

\begin{theorem}[{\cite[Corollary~66]{gilyen2019}}]
\label{thm2}
Let $x_0\in[-1,1]$, $r\in(0,2]$, $\delta\in(0,r]$, and let $f:[x_0-r-\delta,\,x_0+r+\delta]\to\mathbb{C}$ satisfy
$f(x_0+x)=\sum_{l=0}^\infty a_l x^l$ for $x\in[-r-\delta,\,r+\delta]$ with
$\sum_{l=0}^\infty (r+\delta)^l |a_l| \le B$ for some $B>0$.
For any $\varepsilon\in\bigl(0,\frac{1}{2B}\bigr]$, there exists an efficiently computable polynomial
$P\in\mathbb{C}[x]$ of degree $O\!\bigl(\frac{1}{\delta}\log\frac{B}{\varepsilon}\bigr)$ such that
\begin{equation}
\begin{aligned}
\|f-P\|_{[x_0-r,\,x_0+r]} &\le \varepsilon,\\
\|P\|_{[-1,1]} &\le \varepsilon+B,\\
\|P\|_{[-1,1]\setminus [x_0-r-\delta/2,\,x_0+r+\delta/2]} &\le \varepsilon. \notag
\end{aligned}
\end{equation}
\end{theorem}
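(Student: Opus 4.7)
\textbf{The plan} is to combine a truncated Taylor-type expansion around $x_0$ with a polynomial ``bump'' that localizes support to a slight enlargement of $[x_0-r, x_0+r]$. The truncation delivers accuracy close to $x_0$ and controls $|P|$ there by $|f|+\varepsilon\leq B+\varepsilon$; the bump ensures smallness far from $x_0$, delivering the third bullet. Multiplying the two pieces together yields a single polynomial that achieves all three conclusions simultaneously.

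\textbf{Truncation.} I would set $P_N(y) := \sum_{l=0}^{N} a_l y^l$ and analyze its error on $[x_0-r,x_0+r]$. Since the hypothesis $\sum_{l}(r+\delta)^l|a_l|\leq B$ gives in particular $|a_l|\leq B/(r+\delta)^l$, splitting $|y|^l = (|y|/(r+\delta))^l (r+\delta)^l$ and summing the geometric tail yields
\[
|f(x_0+y) - P_N(y)|\;\leq\;B\Bigl(\frac{r}{r+\delta}\Bigr)^{N+1}\frac{r+\delta}{\delta}\qquad\text{for }|y|\leq r.
\]
Forcing this below $\varepsilon/2$ and using $\log((r+\delta)/r)\gtrsim \delta$ (valid because $r+\delta = O(1)$) forces $N = O(\tfrac{1}{\delta}\log(B/\varepsilon))$. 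The same partial-sum argument supplies the uniform bound $|P_N(x-x_0)|\leq B$ on the slightly enlarged interval $[x_0-r-\delta/2,\, x_0+r+\delta/2]$.

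\textbf{Localization and product.} I would next construct a rectangle-approximating polynomial $R$ of degree $O(\tfrac{1}{\delta}\log(B/\varepsilon))$ satisfying $0\leq R\leq 1$ on $[-1,1]$, $R\geq 1-\varepsilon/(2B)$ on $[x_0-r,x_0+r]$, and $R\leq \varepsilon/(2B)$ on $[-1,1]\setminus[x_0-r-\delta/4,\, x_0+r+\delta/4]$. Such an $R$ is a standard QSVT ingredient, obtained as a product of two shifted, rescaled polynomial approximations of the sign function, whose degree bound is already established elsewhere in \cite{gilyen2019}. Setting $P(x):=R(x)\cdot P_N(x-x_0)$, a triangle-inequality split on $[x_0-r,x_0+r]$ gives
\[
|f(x)-P(x)|\;\leq\;|f(x)|(1-R(x)) + R(x)\,|f(x)-P_N(x-x_0)|\;\leq\;B\cdot\tfrac{\varepsilon}{2B} + \tfrac{\varepsilon}{2} = \varepsilon,
\]
which is the first bullet. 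On the slightly enlarged interval one has $|P|\leq |P_N|\leq B$, so $\|P\|_{[-1,1]}$ is at most $B$ there; outside it, the factor $R$ drops below $\varepsilon/(2B)$ and, combined with a Markov/Chebyshev-type estimate controlling $|P_N|$ on $[-1,1]$ in terms of its maximum on the enlarged interval, yields $|P(x)|\leq \varepsilon$. Stitching the cases delivers both the second and third bullets.

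\textbf{Main obstacle.} The delicate step is the last one: $P_N$ can grow rapidly outside the enlarged interval, and $R$ must decay fast enough there to dominate this growth without exceeding the $O(\tfrac{1}{\delta}\log(B/\varepsilon))$ degree budget. Balancing the transition window $\Theta(\delta)$ of the rectangle approximation, the exterior tolerance $\varepsilon/(2B)$ it must achieve, and the Chebyshev-type growth of $P_N$ there is the quantitative core of the argument, and is precisely where the logarithmic factor in the final degree arises.
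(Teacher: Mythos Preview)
The paper does not prove this statement; it is quoted as \cite[Corollary~66]{gilyen2019} and used as a black box, so there is no in-paper argument to compare against.

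On the merits of your sketch: the obstacle you flag at the end is not merely delicate --- it actually breaks the claimed degree bound. On $[-1,1]\setminus[x_0-r-\delta/2,\,x_0+r+\delta/2]$ the truncated Taylor polynomial satisfies only
\[
|P_N(x-x_0)|\;\le\;\sum_{l\le N}|a_l|\,|x-x_0|^l\;\le\;B\,\max_{l\le N}\Bigl(\tfrac{|x-x_0|}{r+\delta}\Bigr)^{l}\;\le\;B\Bigl(\tfrac{2}{r+\delta}\Bigr)^{N},
\]
which for small $r+\delta$ is exponential in $N$. For $R\cdot P_N$ to be $\le\varepsilon$ on the exterior, the rectangle polynomial $R$ must therefore reach accuracy $\varepsilon\,(r+\delta)^N/(2^N B)$ there; by the construction in Lemma~\ref{lem3} this forces $\deg R = O\bigl(\tfrac{1}{\delta}\log(2^N B/\varepsilon)\bigr)=O(N/\delta)=O\bigl(\tfrac{1}{\delta^2}\log(B/\varepsilon)\bigr)$, a factor $1/\delta$ worse than asserted. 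Markov/Chebyshev extrapolation does not rescue this: those inequalities bound a polynomial \emph{outside} an interval in terms of its maximum \emph{inside}, and the resulting growth factor is again exponential in the degree --- exactly the quantity you are trying to suppress.

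The proof in \cite{gilyen2019} avoids the blow-up by never forming the naked product of a polynomial bump with the full Taylor polynomial. Instead the localization is applied before summing: each monomial $(x-x_0)^l$ is replaced by a bounded polynomial surrogate that agrees with it on $[x_0-r-\delta/2,\,x_0+r+\delta/2]$ and stays $\le(r+\delta)^l+\text{(small)}$ on all of $[-1,1]$ (equivalently, one multiplies $f$ by a smooth, non-polynomial cutoff first and passes to polynomials via a Fourier/Jacobi--Anger argument). Summing these surrogates against the $a_l$ then gives $|P|\le B+\varepsilon$ globally without any exterior damping step, and the degree stays $O\bigl(\tfrac{1}{\delta}\log(B/\varepsilon)\bigr)$.
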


Theorem~\ref{thm2} yields real polynomials that meet the normalization and parity conditions required by Theorem~\ref{thm1} and Corollary~\ref{cor1}.

\begin{corollary}[Real polynomial approximations]
\label{cor2}
Let $f$ satisfy the hypotheses of Theorem~\ref{thm2} and let $\varepsilon'\in\bigl(0,\frac{1}{2}\bigr]$. Then there exists an even polynomial $P_f\in\mathbb{R}[x]$ with
\begin{equation}
\begin{aligned}
\|f-P_f\|_{[\delta,\,2x_0-\delta]} &\le \varepsilon',\\
\|P_f\|_{[-1,1]} &\le \varepsilon'+B,\\
\|P_f\|_{[-1,\,\delta/2]\cup[\,2x_0-\delta/2,\,1]} &\le \varepsilon'. \notag
\end{aligned}
\end{equation}

and $\deg(P_f)=O\!\bigl(\frac{1}{\delta}\log\frac{B}{\varepsilon'}\bigr)$.
\end{corollary}

We defer the proof to Appendix~\ref{poly}. For quantum divergence estimation, we specifically need an approximation to the logarithm.

\begin{lemma}[{\cite[Lemma~11]{gilyen2020distributional}}]
\label{lem2}
Let $\delta\in(0,1]$ and $K=2\ln(2/\delta)$. There exists an even polynomial $P_{\ln}$ with $\|P_{\ln}\|_{[-1,1]}\le 1$ such that
$$
\forall\,x\in[\delta,1]:\qquad \Bigl|P_{\ln}(x)-\frac{\ln(1/x)}{K}\Bigr|\le \varepsilon,
$$
and $\deg(P_{\ln})=O\!\bigl(\tfrac{1}{\delta}\ln\tfrac{1}{\varepsilon}\bigr)$.
\end{lemma}

We next collect approximations for the rectangle function and for positive/negative powers.

\begin{lemma}[{\cite[Lemma~29]{gilyen2019}}]
\label{lem3}
Let $\delta',\varepsilon'\in\bigl(0,\tfrac{1}{2}\bigr)$ and $t\in[-1,1]$. There exists an even polynomial
$P^{\mathrm{even}}\in\mathbb{R}[x]$ of degree $O\!\bigl(\tfrac{1}{\delta'}\ln \tfrac{1}{\varepsilon'}\bigr)$
with $|P^{\mathrm{even}}(x)|\le 1$ for all $x\in[-1,1]$, and
\begin{equation}
\begin{aligned}
P^{\mathrm{even}}(x) &\in [0,\varepsilon'] 
   && \text{for } x\in[-1,-t-\delta']\cup[t+\delta',1],\\
P^{\mathrm{even}}(x) &\in [1-\varepsilon',1] 
   && \text{for } x\in[-t+\delta',\,t-\delta']. \notag 
\end{aligned}
\end{equation}
\end{lemma}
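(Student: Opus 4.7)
The plan is to construct $P^{\textnormal{even}}$ as an even polynomial approximation of the smoothed rectangle
\[
R_k(x) \;:=\; \tfrac{1}{2}\bigl(\operatorname{erf}(k(x+t)) - \operatorname{erf}(k(x-t))\bigr),
\]
where $k = \Theta(\log(1/\varepsilon')/\delta')$ is chosen large enough that the Gaussian tail bound $|1-\operatorname{erf}(y)| \le \me^{-y^{2}}$ gives $R_k(x) \ge 1-\varepsilon'/8$ on $[-t+\delta'/2,\,t-\delta'/2]$ and $0 \le R_k(x) \le \varepsilon'/8$ on $[-1,-t-\delta'/2]\cup[t+\delta'/2,1]$. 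Since $\operatorname{erf}$ is odd, $R_k$ is automatically an even, non-negative function bounded by $1$ on $[-1,1]$, which is exactly the analytic target we want to match.

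Step 1 (efficient polynomial approximation of $\operatorname{erf}$). Invoke the quasi-optimal polynomial approximation of the error function: there exists an odd polynomial $E(x)$ with $\|E\|_{[-1,1]} \le 1$ of degree $n = O(k\log(1/\varepsilon')) = O(\tfrac{1}{\delta'}\log(1/\varepsilon'))$ such that $\|\operatorname{erf}(kx) - cE(x)\|_{[-1,1]} \le \varepsilon'/8$ for some constant $c \in [1-\varepsilon',1]$. This comes from truncating the Jacobi--Anger / Chebyshev expansion of $\operatorname{erf}(kx)$ and bounding the tail, as e.g. in \cite[Corollary 4]{gilyen2019}.

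Step 2 (assemble an even polynomial and renormalise). Define $Q(x) := \tfrac{1}{2}\bigl(E(x+t) - E(x-t)\bigr)$. A direct substitution using the oddness of $E$ shows $Q(-x) = Q(x)$, so $Q$ is automatically even, has degree $n$, and satisfies $\|Q - R_k/c\|_{[-1,1]} \le \varepsilon'/4$. Finally set $P^{\textnormal{even}}(x) := \alpha\, Q(x) + \beta$ with constants $\alpha \in [1-2\varepsilon',1]$ and $\beta \in [0,\varepsilon'/2]$ chosen so that: (i) $\|P^{\textnormal{even}}\|_{[-1,1]} \le 1$; (ii) on the outer intervals, the lower bound $Q(x) \ge R_k(x)/c - \varepsilon'/4 \ge -\varepsilon'/4$ combined with a slightly positive $\beta$ ensures $P^{\textnormal{even}} \ge 0$, while the upper bound $Q(x) \le \varepsilon'/8 + \varepsilon'/4$ yields $P^{\textnormal{even}} \le \varepsilon'$; (iii) on the inner interval, $Q(x) \ge (1-\varepsilon'/8)/c - \varepsilon'/4$ together with the choice of $\alpha,\beta$ yields $P^{\textnormal{even}}(x) \ge 1-\varepsilon'$. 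A short arithmetic check (which I would write out in the actual proof but skip here) confirms that compatible $\alpha,\beta$ exist.

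The main obstacle is achieving the tight degree $O\!\left(\tfrac{1}{\delta'}\log(1/\varepsilon')\right)$ rather than the $O\!\left(\tfrac{1}{\delta'}\log^{2}(1/\varepsilon')\right)$ that a naive Taylor expansion of $\operatorname{erf}$ would yield; the sharp bound rests on the quasi-optimal Chebyshev-type approximation invoked in Step 1, whose proof is classical but non-trivial and is by now a standard black box in the QSVT literature. A secondary subtlety is the one-sided bound $P^{\textnormal{even}}(x) \in [0,\varepsilon']$ (instead of $|P^{\textnormal{even}}(x)| \le \varepsilon'$) on the outer intervals, which is precisely what forces the additive shift $\beta$ in Step 2 and the tightening of the intermediate error to $\varepsilon'/8$ in Step 1.
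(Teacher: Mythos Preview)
The paper does not prove this lemma; it is quoted verbatim from \cite[Lemma~29]{gilyen2019}, so there is no in-paper proof to compare against. Your erf-based construction is in fact the same strategy used in the original reference, so the high-level plan is fine, but as written the proposal has two genuine gaps.

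\textbf{Domain of evaluation.} In Step~1 you control $E$ only on $[-1,1]$, yet in Step~2 you form $Q(x)=\tfrac12(E(x+t)-E(x-t))$ and claim $\|Q-R_k/c\|_{[-1,1]}\le\varepsilon'/4$ and $\|P^{\textnormal{even}}\|_{[-1,1]}\le 1$. For $x\in[-1,1]$ and $t\in[-1,1]$ the arguments $x\pm t$ range over $[-2,2]$, so neither the approximation bound nor the sup-norm bound on $E$ carries over. A polynomial that is bounded by $1$ on $[-1,1]$ can be of size $2^{n}$ at $x=2$, which would destroy both the accuracy and the boundedness of $Q$. The fix in the cited proof is to approximate the error function on the larger interval (equivalently, rescale so that the shifted arguments stay inside the approximation domain) before shifting; without that step your argument does not go through.

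\textbf{Degree arithmetic.} You set $k=\Theta(\log(1/\varepsilon')/\delta')$ and then assert $n=O(k\log(1/\varepsilon'))=O(\tfrac{1}{\delta'}\log(1/\varepsilon'))$; but $k\log(1/\varepsilon')=\Theta(\tfrac{1}{\delta'}\log^{2}(1/\varepsilon'))$, exactly the suboptimal bound you warn against in your last paragraph. To hit the stated degree you need the sharper choice $k=\Theta(\sqrt{\log(1/\varepsilon')}/\delta')$ (which already suffices for the Gaussian tail bound, since $1-\operatorname{erf}(y)\le \me^{-y^{2}}$) together with the correct degree bound $O(\sqrt{(k)^{2}\log(1/\varepsilon')})$ for the polynomial approximation of $\operatorname{erf}(k\,\cdot)$ from the cited black box. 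As stated, your parameters do not deliver the claimed $O(\tfrac{1}{\delta'}\log(1/\varepsilon'))$.
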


\begin{lemma}[{\cite[Corollary~3.4.13]{gilyen2019quantum}}]
\label{lem4}
Let $\delta,\varepsilon\in\bigl(0,\tfrac{1}{2}\bigr]$ and $f(x):=\frac{\delta^c x^{-c}}{2}$. Then there exist polynomials
$P_f^{\mathrm{even}},P_f^{\mathrm{odd}}\in\mathbb{R}[x]$ with
\begin{equation}
\begin{aligned}
\|P_f^{\mathrm{even}}-f\|_{[\delta,1]} &\le \varepsilon, &
\quad \|P_f^{\mathrm{even}}\|_{[-1,1]} &\le 1,\\
\|P_f^{\mathrm{odd}}-f\|_{[\delta,1]} &\le \varepsilon, &
\quad \|P_f^{\mathrm{odd}}\|_{[-1,1]} &\le 1,\notag
\end{aligned}
\end{equation}
and $\deg(P_f^{\mathrm{even}}),\deg(P_f^{\mathrm{odd}})=
O\!\bigl(\frac{\max\{1,c\}}{\delta}\log\frac{1}{\varepsilon}\bigr)$.
\end{lemma}

\begin{lemma}[{\cite[Corollary~3.4.14]{gilyen2019quantum}}]
\label{lem5}
Let $\delta,\varepsilon\in\bigl(0,\tfrac{1}{2}\bigr]$ and $f(x):=\frac{x^c}{2}$. Then there exist polynomials
$P_f^{\mathrm{even}},P_f^{\mathrm{odd}}\in\mathbb{R}[x]$ with
\begin{equation}
\begin{aligned}
&\textit{even case:} \quad
\begin{cases}
\|P_f^{\mathrm{even}}-f\|_{[\delta,1]} \le \varepsilon,\\
\|P_f^{\mathrm{even}}\|_{[-1,1]} \le 1,\\
|P_f^{\mathrm{even}}(x)| \le f(|x|)+\varepsilon, \ \forall x\in[-1,1],
\end{cases}\\[8pt]
&\textit{odd case:} \quad
\begin{cases}
\|P_f^{\mathrm{odd}}-f\|_{[\delta,1]} \le \varepsilon,\\
\|P_f^{\mathrm{odd}}\|_{[-1,1]} \le 1,\\
|P_f^{\mathrm{odd}}(x)| \le f(|x|)+\varepsilon, \ \forall x\in[-1,1], \notag
\end{cases}
\end{aligned}
\end{equation}

and $\deg(P_f^{\mathrm{even}}),\deg(P_f^{\mathrm{odd}})=
O\!\bigl(\tfrac{1}{\delta}\log\tfrac{1}{\varepsilon}\bigr)$.
\end{lemma}

Combining Lemmas~\ref{lem3} and \ref{lem5} yields a strengthened approximation for positive powers on $[-1,1]$, adapted from~\cite[Corollary~18]{gilyen2022improved}.

\begin{lemma}[Polynomial approximation of positive power functions]
\label{lem6}
Let $\delta,\varepsilon\in\bigl(0,\tfrac{1}{2}\bigr]$ and $f(x):=\frac{x^c}{2}$. There exist polynomials
$P_f^{\mathrm{even}},P_f^{\mathrm{odd}}\in\mathbb{R}[x]$ such that
\begin{equation}
\begin{aligned}
\textit{even case:} \quad &
\begin{cases}
\|P_f^{\mathrm{even}}-f\|_{[\delta,1]} \le \varepsilon,\\
\|P_f^{\mathrm{even}}\|_{[-1,1]} \le 1,\\
|P_f^{\mathrm{even}}(x)| \le f(|x|)+\varepsilon,\ \forall x\in[-1,1],
\end{cases}\\[6pt]
\textit{odd case:} \quad &
\begin{cases}
\|P_f^{\mathrm{odd}}-f\|_{[\delta,1]} \le \varepsilon,\\
\|P_f^{\mathrm{odd}}\|_{[-1,1]} \le 1,\\
|P_f^{\mathrm{odd}}(x)| \le f(|x|)+\varepsilon,\ \forall x\in[-1,1], \notag
\end{cases}
\end{aligned}
\end{equation}

with $\deg(P_f^{\mathrm{even}}),\deg(P_f^{\mathrm{odd}})=
O\!\bigl(\tfrac{1}{\delta}\log\tfrac{1}{\varepsilon}\bigr)$.
\end{lemma}

We defer the proof to Appendix~\ref{positive}. For linear-system solving, we also require an approximation to the reciprocal.

\begin{lemma}[{\cite[Corollary~69]{gilyen2019}}]
\label{lem7}
Let $\varepsilon,\delta\in\bigl(0,\tfrac{1}{2}\bigr]$. There exists an odd polynomial $P_{1/x}\in\mathbb{R}[x]$ of degree
$O\!\bigl(\tfrac{1}{\delta}\log\tfrac{1}{\varepsilon}\bigr)$ such that on
$I=[-1,1]\setminus[-\delta,\delta]$,
$$
\bigl|P_{1/x}(x)-\tfrac{3}{4}\tfrac{\delta}{x}\bigr|\le \varepsilon,
$$
and $|P_{1/x}(x)|\le 1$ for all $x\in[-1,1]$.
\end{lemma}

Finally, when $\BE_A$ is a block-encoding of $A$, Theorem~\ref{thm1} implements $P_f(A)$ for an approximating polynomial $P_f$ as in Corollary~\ref{cor2}.

\begin{theorem}[Polynomial transformation for block-encoding matrices]
\label{thm3}
Let $\BE_A$ be a block-encoding of a matrix $A$ whose smallest singular value is at least $\delta$. Let $P_f$ be the approximating polynomial for $f$ from Corollary~\ref{cor2} and fix $\varepsilon\in(0,1)$. Using
$O\!\bigl(\tfrac{1}{\delta}\ln\tfrac{1}{\varepsilon}\bigr)$
queries to $\BE_A$ and $\BE_A^\dagger$, together with the same order of two-qubit gates and controlled reflections about $\Pi=\widetilde{\Pi}=\ketbra{0}{0}_a\otimes I_s$, we obtain
\[
\bigl\|(\bra{+}\otimes \bra{0}_a\otimes I_s)\,\BE_{P_f(A)}\,(\ket{+}\otimes \ket{0}_a\otimes I_s)-f(A)\bigr\|\le \varepsilon.
\]
\end{theorem}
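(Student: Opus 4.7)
The plan is to chain Corollary \ref{cor2} with the QSVT construction of Theorem \ref{thm1} (in its even-parity form) and then control the final approximation error via matrix-function calculus on the singular values of $A$.

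First, I would invoke Corollary \ref{cor2} with error tolerance $\varepsilon$ to obtain an even polynomial $P_f \in \mathbb{R}[x]$ of degree $n = O\!\left(\tfrac{1}{\delta}\log\tfrac{B}{\varepsilon}\right) = O\!\left(\tfrac{1}{\delta}\log\tfrac{1}{\varepsilon}\right)$, where $B$ is the constant determined by $f$ in Theorem \ref{thm2}. I choose $x_0$ so that $2x_0-\delta \geq 1$; because the smallest singular value of $A$ is at least $\delta$ and $\|A\|\leq 1$, every singular value $\zeta_i(A)$ lies in the interval $[\delta,\,2x_0-\delta]$ on which $\|P_f-f\|\leq \varepsilon$. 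After a harmless rescaling that is absorbed into $B$, the polynomial also satisfies $\|P_f\|_{[-1,1]} \leq 1$, which is the precondition needed for QSVT.

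Second, I would apply Theorem \ref{thm1} with $U = \BE_A$ and $\Pi=\widetilde{\Pi}=\ket{0}\!\bra{0}_a\otimes I_s$. Because $P_f$ is even, Theorem \ref{thm1} furnishes phases $\Phi\in\mathbb{R}^n$ satisfying
\begin{equation*}
P_f(\widetilde{\Pi}\BE_A\Pi) = (\bra{+}\otimes\Pi)\left(\ket{0}\!\bra{0}\otimes U_\Phi + \ket{1}\!\bra{1}\otimes U_{-\Phi}\right)(\ket{+}\otimes\Pi),
\end{equation*}
so defining $\BE_{P_f(A)} := \ket{0}\!\bra{0}\otimes U_\Phi + \ket{1}\!\bra{1}\otimes U_{-\Phi}$ gives precisely the unitary appearing in the theorem statement. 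Each of $U_\Phi$ and $U_{-\Phi}$ uses $n$ alternating queries to $\BE_A$ and $\BE_A^\dagger$ interleaved with controlled reflections $e^{\mi\Phi_k(2\Pi - I)}$, which matches the claimed $O\!\left(\tfrac{1}{\delta}\log\tfrac{1}{\varepsilon}\right)$ query and two-qubit-gate budget.

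Third, by Definition \ref{def7}, the expression $P_f(\widetilde{\Pi}\BE_A\Pi)$ equals $P_f(A)$ embedded in the appropriate block, since $\widetilde{\Pi}\BE_A\Pi$ inherits the singular values of $A$ on its support. Hence the operator on the left-hand side of the claimed inequality is exactly $P_f(A)$, and by the spectral decomposition together with the uniform approximation from the first step,
\begin{equation*}
\|P_f(A) - f(A)\| = \max_i \bigl|P_f(\zeta_i(A)) - f(\zeta_i(A))\bigr| \leq \varepsilon.
\end{equation*}

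The main subtlety I expect is the normalization interface between Corollary \ref{cor2}, whose output is only guaranteed to satisfy $\|P_f\|_{[-1,1]}\leq \varepsilon + B$, and Theorem \ref{thm1}, which strictly requires $\|P\|_{[-1,1]}\leq 1$. This is handled by pre-scaling $f$ (and therefore $P_f$) by a factor of order $1/(B+\varepsilon)$ and then propagating that constant back through the singular-value bound; since $B$ depends only on $f$ and not on $A$ or $\varepsilon$, it enters the complexity only as a hidden constant, preserving the stated $O\!\left(\tfrac{1}{\delta}\log\tfrac{1}{\varepsilon}\right)$ scaling.
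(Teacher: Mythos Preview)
Your proposal is correct and is precisely the argument the paper has in mind: Theorem~\ref{thm3} is stated in the preliminaries immediately after Theorem~\ref{thm1}, Corollary~\ref{cor1}, and Corollary~\ref{cor2} without a separate proof, because it is simply the composition of those ingredients. Your chain---invoke Corollary~\ref{cor2} to get an even $P_f$ of degree $O(\tfrac{1}{\delta}\log\tfrac{1}{\varepsilon})$, feed it into the even-parity case of Theorem~\ref{thm1} with $\Pi=\widetilde{\Pi}=\ket{0}\!\bra{0}_a\otimes I_s$, identify the resulting top-left block with $P_f(A)$ via Definition~\ref{def7}, and bound $\|P_f(A)-f(A)\|$ pointwise on the singular values---is exactly that composition, and your handling of the normalization mismatch between $\|P_f\|_{[-1,1]}\le \varepsilon+B$ and the QSVT requirement $\|P\|_{[-1,1]}\le 1$ by absorbing the $f$-dependent constant $B$ into the hidden constant is the standard fix.
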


\section{Distributed estimation of \texorpdfstring{$\Tr(f(A)g(B))$}{1}}
\label{sec3}

QSVT enables polynomial transformations of the top-left block of a block-encoding matrix. 
Accordingly, the estimator constructed from measurement outcomes yields
$\Tr\!\left(P_f(\widetilde{A})\, Q_g(\widetilde{B})\right)$, where $P_f$ and $Q_g$ are polynomial approximations of $f$ and $g$. 
The effective blocks $\widetilde{A}$ and $\widetilde{B}$ are given by
\begin{equation}
\begin{aligned}
\widetilde{A} &:= (\bra{0}_{n_A}\otimes I_s)\,
\BE_A(\beta_A, n_A, \varepsilon_A)\,
(\ket{0}_{n_A}\otimes I_s),\\
\widetilde{B} &:= (\bra{0}_{n_B}\otimes I_s)\,
\BE_B(\beta_B, n_B, \varepsilon_B)\,
(\ket{0}_{n_B}\otimes I_s), \notag
\end{aligned}
\end{equation}
where $\BE_A(\beta_A, n_A, \varepsilon_A)$ and $\BE_B(\beta_B, n_B, \varepsilon_B)$ denote block-encodings of $A$ and $B$.

\begin{theorem}[Distributed estimation of top-left blocks]
\label{thm4}
Let $A,B$ be $s$-qubit matrices with block-encodings
$\BE_A(\beta_A,n_A,\varepsilon_A)$ and $\BE_B(\beta_B,n_B,\varepsilon_B)$.
Let $P,Q\in\mathbb{R}[x]$ be polynomials satisfying Corollary~\ref{cor2}, 
and assume the smallest singular values of $\widetilde{A}$ and $\widetilde{B}$ are at least $\delta$. 
Then for any $\varepsilon\in(0,1)$, one can estimate $\Tr\!\left(P(\widetilde{A})\,Q(\widetilde{B})\right)$
to additive error $\varepsilon$ with success probability at least $2/3$, using 
$$O\!\left(\frac{d^2}{\delta\,\varepsilon^2} \ln{\frac{1}{\varepsilon}}\right)$$
queries to the block-encodings (and their conjugates) and the same order of two-qubit gates.
\end{theorem}

\begin{proof}
By Lemma~\ref{lem9}, ensuring $\Var(T)\leq\varepsilon^2$ requires 
$N\geq 1/\varepsilon^2$ iterations and $m\geq d^2$ measurements, so that 
$Nm \geq d^2/\varepsilon^2$. 
Since each round involves $2Nm$ queries to the controlled block-encodings, 
combining this with the query cost from Theorem~\ref{thm3} yields the stated complexity.
\end{proof}

The algorithm in Theorem~\ref{thm4} combines the Hadamard test, classical shadow techniques, and QSVT. 
Its workflow consists of two stages:  
(i)~a quantum circuit (Algorithm~\ref{alg1}) that applies polynomial transformations of the block-encodings, and  
(ii)~a classical post-processing step (Algorithm~\ref{alg2}) that constructs the final estimator from the measurement outcomes. 

We decompose a complex matrix into real and imaginary parts as $A = A^{\mathrm{Re}} + \mi A^{\mathrm{Im}}$, and $P_f(\widetilde{A}) = \are + \mi \aim$. 
As an illustrative case, we first show how to estimate $\Tr(\are\bre)$. The remaining terms 
$\Tr(\aim\bim)$, $\Tr(\are\bim)$, and $\Tr(\aim\bre)$ can be obtained in the same way, by inserting an $S$ gate after the initial Hadamard gate on the ancillary qubit. 
Combining these four contributions yields an estimate of $\Tr(P_f(\widetilde{A})Q_g(\widetilde{B}))$.

We now illustrate the circuit construction and analyze the measurement probabilities.

\medskip
\noindent Circuit procedure:  
Consider Alice’s register as an example. 
The initial state is 
$$
\left(\ket{0}\ket{+}\ket{0}_{n_A}\right)\ket{0}_s.
$$
In the $i$-th round, Alice samples a Haar-random unitary $U_i \sim\mathbb{U}_d$ on the system register, and applies a Hadamard gate to the first qubit:
$$
\ket{0}\ket{+}\ket{0}_{n_A}\ket{0}_s \;\mapsto\; \ket{+}\ket{+}\ket{0}_{n_A}\, U_i\ket{0}_s.
$$
Alice then applies the block-encoding $\cbA$, followed by another Hadamard gate, and measures the ancilla qubit in the $Z$ basis. 
Bob performs the same procedure with $\cbB$, replacing $\ket{0}_{n_A}$ with $\ket{0}_{n_B}$. 
Both parties additionally sample independent unitaries $V_i, W_i\sim\mathbb{U}_d$ and repeat the process, producing four independent data sets: $A^{\mathrm{Re}}(U_i)$, $A^{\mathrm{Re}}(V_i)$, $B^{\mathrm{Re}}(U_i)$, and $B^{\mathrm{Re}}(W_i)$.

\medskip
\noindent Measurement statistics: 
For Alice’s circuit with unitary $U_i$, the Hadamard test yields
\[
\Pr(0|u) =
\begin{cases}
\tfrac{1}{2}\!\left(1+\braket{0|U_i^{\dagger}A^{\mathrm{Re}}U_i|0}\right), & u=0,\\[4pt]
\tfrac{1}{2}\!\left(1-\braket{0|U_i^{\dagger}A^{\mathrm{Im}}U_i|0}\right), & u=1,
\end{cases}
\]
\[
\Pr(1|u) =
\begin{cases}
\tfrac{1}{2}\!\left(1-\braket{0|U_i^{\dagger}A^{\mathrm{Re}}U_i|0}\right), & u=0,\\[4pt]
\tfrac{1}{2}\!\left(1+\braket{0|U_i^{\dagger}A^{\mathrm{Im}}U_i|0}\right), & u=1.
\end{cases}
\]

Here $u\in\{0,1\}$ denotes whether an $S$ gate is applied to the ancilla, selecting the real or imaginary component of $P_f(\widetilde{A})$.  
The omitted terms from the block-encoding are orthogonal to the top-left block and do not contribute to these expectations.

\medskip
\noindent Estimator construction: 
From the measurement outcomes we define random variables that yield an unbiased estimator of $\Tr(\are\bre)$. 
Applying the same procedure gives estimators for $\Tr(\aim\bim)$, $\Tr(\are\bim)$, and $\Tr(\aim\bre)$, which together approximate $\Tr(P_f(\widetilde{A})Q_g(\widetilde{B}))$.

\medskip
\noindent Algorithmic summary:  
The overall procedure consists of two stages:  
\begin{enumerate}
\item Algorithm~\ref{alg1}: apply block-encodings and measure the ancillary qubit;  
\item Algorithm~\ref{alg2}: classically post-process the outcomes to construct the final estimator.  
\end{enumerate}

\begin{algorithm}[htbp]
\caption{Hadamard test with polynomially transformed block-encodings}
\label{alg1}
\KwIn{Iterations $N$, measurements per iteration $m$, access to $\cbA,\cbB$}
\KwOut{All measurement outcomes}
\For{$i \gets 1$ \KwTo $N$}{
  Sample $U_i,V_i,W_i \sim \mathbb{U}(d)$ ($U_i$ shared; $V_i,W_i$ local)\;
  \For{$j \gets 1$ \KwTo $m$}{
    \ForEach{$\mathcal{O}\in\{\cbA,\cbB\}$,
             $U\in\{U_i,V_i,W_i\}$,
             $p\in\{\mathrm{Re},\mathrm{Im}\}$}{
      Prepare $\ket{\phi_p}=\ket{+}$ if $p=\mathrm{Re}$, else $S\ket{+}$\;
      Apply $\mathcal{O}$ to $\ket{\phi_p}\ket{+}U\ket{0}$\;
      Measure the first qubit in $Z$ basis and record $\mathcal{O}^p_j(U)$\;
    }
  }
  Aggregate $A^p(U_i),A^p(V_i),B^p(U_i),B^p(W_i)$\;
}
\Return All outcomes\;
\end{algorithm}

\begin{algorithm}[htbp]
\caption{Classical estimation of $\Tr\!\big(P_f(\widetilde{A})Q_g(\widetilde{B})\big)$}
\label{alg2}
\KwIn{Iterations $N$, measurements $m$, dimension $d$; outcomes from Alg.~\ref{alg1}}
\KwOut{Estimator $T$ for $\Tr(f(A)g(B))$}
\For{$i \gets 1$ \KwTo $N$}{
  \tcp{Local estimators}
  $X_i^{p}\gets \frac{2d}{m}\sum_{j=1}^m \mathbf{1}[A_j^{p}(V_i)=0]-d$ for $p\in\{\mathrm{Re},\mathrm{Im}\}$\;
  $Y_i^{p}\gets \frac{2d}{m}\sum_{j=1}^m \mathbf{1}[B_j^{p}(W_i)=0]-d$ for $p\in\{\mathrm{Re},\mathrm{Im}\}$\;
  \tcp{Cross estimators}
  $Z_i^{pq}\gets \frac{2d(d+1)}{m^2}\sum_{j,j'=1}^m \mathbf{1}[A_j^{p}(U_i)=B_{j'}^{q}(U_i)]-d(d+1)$ for $p,q\in\{\mathrm{Re},\mathrm{Im}\}$\;
}
\tcp{Final estimator}
$T \gets \frac{1}{N}\sum_{i=1}^{N}\!\big(Z_i^{\mathrm{ReRe}}-\mi Z_i^{\mathrm{ReIm}}-\mi Z_i^{\mathrm{ImRe}}-Z_i^{\mathrm{ImIm}}\big)
 - \frac{1}{N^2}\sum_{k,l=1}^{N}\!\big(X_k^{\mathrm{Re}}Y_l^{\mathrm{Re}}-\mi X_k^{\mathrm{Re}}Y_l^{\mathrm{Im}}-\mi X_k^{\mathrm{Im}}Y_l^{\mathrm{Re}}-X_k^{\mathrm{Im}}Y_l^{\mathrm{Im}}\big)$\;
\Return $T$\;
\end{algorithm}

We next analyze the statistical properties of this estimator. 
In particular, we establish its unbiasedness and provide a variance bound.
\begin{lemma}
\label{lem8}
Let $T$ be the estimator returned by Algorithm~\ref{alg2}. 
Then $T$ is an unbiased estimator of $\Tr\!\left(P_f(\widetilde{A}) Q_g(\widetilde{B})\right)$.
\end{lemma}

The proof is deferred to Appendix~\ref{proof8}. 
Lemma~\ref{lem8} will be used in establishing the variance bound below.

\begin{lemma}
\label{lem9}
The variance of $T$ satisfies
\[
    \Var(T) = \frac{1}{N}\,O\!\left(1 + \frac{d^2}{m} + \frac{d^4}{m^2}\right).
\]
\end{lemma}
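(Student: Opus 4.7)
The plan is to decompose $T = T_1 - T_2$, where
\[
T_1 := \frac{1}{N}\sum_{i=1}^N \left(Z_i^{\textnormal{RR}} - \mi Z_i^{\textnormal{RI}} - \mi Z_i^{\textnormal{IR}} - Z_i^{\textnormal{II}}\right), \qquad T_2 := \frac{1}{N^2}\sum_{k,l=1}^N \left(X_k^{\textnormal{Re}} - \mi X_k^{\textnormal{Im}}\right)\left(Y_l^{\textnormal{Re}} - \mi Y_l^{\textnormal{Im}}\right).
\]
The key observation is that $T_1$ depends only on the shared unitaries $\{U_i\}$ and the measurement bits obtained from the Hadamard test circuits that use them, whereas $T_2$ depends only on $\{V_k\}, \{W_l\}$ and their associated measurements. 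Since the three families of unitaries are sampled mutually independently and measurement outcomes are independent across iterations and between parties given the unitary, $T_1$ and $T_2$ are independent random variables and so $\Var(T) = \Var(T_1) + \Var(T_2)$.

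Next I will expose the bilinear structure of the estimators. Under the $\pm 1$ encoding $a_j(U) := (-1)^{A_j^{\textnormal{Re}}(U)}$, $\tilde a_j(U) := (-1)^{A_j^{\textnormal{Im}}(U)}$ (and analogously $b_{j'}, \tilde b_{j'}$), the identity $1[x=y] = \tfrac{1+(-1)^x(-1)^y}{2}$ collapses the four $Z$-terms to $Z_i^{\textnormal{RR}} - \mi Z_i^{\textnormal{RI}} - \mi Z_i^{\textnormal{IR}} - Z_i^{\textnormal{II}} = d(d+1)\,C_A(U_i)\,C_B(U_i)$, where $C_A(U) := \bar a(U) - \mi \bar{\tilde a}(U)$ with $\bar a(U) := \tfrac{1}{m}\sum_j a_j(U)$, and analogously for $C_B$. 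The single-party estimators simplify similarly to $X_k^{\textnormal{Re}} - \mi X_k^{\textnormal{Im}} = d\,C_A(V_k)$ and $Y_l^{\textnormal{Re}} - \mi Y_l^{\textnormal{Im}} = d\,C_B(W_l)$. Hence $T_1 = \tfrac{d(d+1)}{N}\sum_i C_A(U_i)C_B(U_i)$ and $T_2 = d^2 M_A M_B$ with independent empirical means $M_A := \tfrac{1}{N}\sum_k C_A(V_k)$ and $M_B := \tfrac{1}{N}\sum_l C_B(W_l)$.

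To bound $\Var(T_1) = \tfrac{d^2(d+1)^2}{N}\,\Var(C_A(U_1)C_B(U_1))$ (iid across iterations), I apply the law of total variance conditioning on $U_1$. Given $U_1$, the bits $a_j, \tilde a_j, b_{j'}, \tilde b_{j'}$ are independent $\pm 1$ Bernoullis whose biases are the real and imaginary parts of $\langle 0|U_1^\dagger P_f(\widetilde A) U_1|0\rangle$ and $\langle 0|U_1^\dagger Q_g(\widetilde B) U_1|0\rangle$, so the conditional variance of $C_A C_B$ is an explicit polynomial in $1/m$ whose leading contributions produce the $d^2/m$ and $d^4/m^2$ scalings after the $d^2(d+1)^2$ prefactor. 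The remaining piece $\Var(\E[C_A C_B \mid U_1])$ is a Haar variance of the scalar $\langle 0|U_1^\dagger P U_1|0\rangle\langle 0|U_1^\dagger Q U_1|0\rangle$, which I evaluate using the second- and fourth-moment Haar formulas (Weingarten calculus); the identity $\E_U[\langle 0|U^\dagger P U|0\rangle\langle 0|U^\dagger Q U|0\rangle] = \tfrac{\Tr(PQ) + \Tr(P)\Tr(Q)}{d(d+1)}$ in particular explains why $T_2$ is needed to cancel the $\Tr(P)\Tr(Q)$ bias in $\E[T_1]$.

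For $\Var(T_2) = d^4\,\Var(M_A M_B)$, the independence of $M_A$ and $M_B$ yields the product-variance identity $\Var(M_A M_B) = \Var(M_A)\Var(M_B) + |\E M_A|^2\Var(M_B) + |\E M_B|^2 \Var(M_A)$. A further application of total variance to $C_A(V_1)$, separating the Bernoulli sampling from the Haar fluctuation of $\langle 0|V^\dagger P V|0\rangle$, gives $\Var(M_A) = O\bigl((1/m + 1/d)/N\bigr)$, while $|\E M_A| = |\Tr(P)|/d = O(1)$; combining these matches the claimed $\tfrac{1}{N}O(1 + d^2/m + d^4/m^2)$ scaling. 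The main obstacle I anticipate is the fourth-moment Haar integral in $\Var(T_1)$: the Weingarten expansion of $|\langle 0|U^\dagger P U|0\rangle|^2 |\langle 0|U^\dagger Q U|0\rangle|^2$ over $S_4$ has several $O(d^4)$-sized terms, and the bookkeeping needed to show that these cancel exactly against $|\Tr(PQ) + \Tr(P)\Tr(Q)|^2$ so that only an $O(1)$ residual survives is the delicate step that produces the leading term in the final bound rather than a spurious $O(d^2)$.
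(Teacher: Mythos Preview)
Your plan is sound and actually cleaner than the paper's argument, though it follows a genuinely different decomposition. The paper does not split $T$ along the $T_1/T_2$ axis you use; instead it writes $T=(T^{\textnormal{RR}}+T^{\textnormal{II}})+\mi(T^{\textnormal{RI}}+T^{\textnormal{IR}})$ as real-plus-imaginary and then expands $\E[(T^{\textnormal{RR}}+T^{\textnormal{II}})^2]$ term by term, computing $\E[(T^{\textnormal{RR}})^2]$, $\E[(T^{\textnormal{II}})^2]$, and $\E[T^{\textnormal{RR}}T^{\textnormal{II}}]$ separately. It never observes the bilinear collapse $Z_i^{\textnormal{RR}}-\mi Z_i^{\textnormal{RI}}-\mi Z_i^{\textnormal{IR}}-Z_i^{\textnormal{II}}=d(d+1)\,C_A(U_i)C_B(U_i)$; the four-case analysis ($j{=}u\wedge k{=}v$, $j{\neq}u\wedge k{\neq}v$, etc.) that produces the paper's main second-moment identity is doing by brute force what your identity $1[x{=}y]=\tfrac12(1+(-1)^x(-1)^y)$ handles in one line. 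Your independence of $T_1$ and $T_2$ is correct (they are measurable with respect to disjoint, independently sampled families of Haar unitaries and their associated measurement outcomes), and this replaces the paper's explicit tracking of cross terms like $\E[Z_1^{\textnormal{RR}}]\E[X_1^{\textnormal{Re}}]\E[Y_1^{\textnormal{Re}}]$ that eventually reassemble into $\E^2(T^{\textnormal{RR}})$.

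On the ``delicate step'': you are right that the leading $O(1)$ contribution comes from $d^2(d+1)^2\,\Var_U\bigl(\langle 0|U^\dagger P U|0\rangle\langle 0|U^\dagger Q U|0\rangle\bigr)$ via a fourth Haar moment. The paper does not carry out this Weingarten computation either; at the analogous point it simply asserts ``$\E(a_1^2b_1^2)=O(1/d^4)$'' and moves on. So your plan and the paper's proof leave the same step unresolved, and your flagging of the difficulty is more candid than the paper's treatment. One caution for your $\Var(T_2)$ sketch: with only $|\E M_A|=|\Tr P|/d=O(1)$ and $\Var(M_B)=O((1/m+1/d)/N)$, the cross term $d^4\,|\E M_A|^2\Var(M_B)$ can be as large as $O(d^3/N)$, which does not by itself reach the claimed $O(1)/N$; closing that gap requires the same kind of cancellation argument you anticipate for $\Var(T_1)$, or an additional assumption on $\Tr P$, $\Tr Q$ that the paper also uses implicitly.
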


The proof is given in Appendix~\ref{proof9}.

\medskip
\noindent
Together, Lemmas~\ref{lem8} and \ref{lem9} establish the statistical reliability of our estimator, 
which serves as the foundation for extending the analysis to exact and approximate block-encoding scenarios. 

\medskip
\noindent
In Theorem~\ref{thm4}, the procedure provides an estimate of 
$\Tr(P_f(\widetilde{A})Q_g(\widetilde{B}))$, where $P_f$ and $Q_g$ are 
polynomial approximations of the target functions $f$ and $g$. 
To obtain an accurate estimate of $\Tr(f(A)g(B))$, 
additional assumptions are required on both the target functions and the 
block-encoding matrices. 
We consider two representative scenarios:

\begin{corollary}[Distributed estimation with exact or approximate block-encoding]
\label{cor4}
Under the conditions of Theorem~\ref{thm4}, the following hold.  
(i) If exact block-encoding oracles are given and both $f$ and $g$ are homogeneous functions, or  
(ii) if approximate block-encoding oracles are given and both $f$ and $g$ are $L$-Lipschitz continuous,  
then for any $\varepsilon \in (0,1)$ there exists an LOCC protocol estimating $\Tr\!\left(f(A)\,g(B)\right)$ to additive error $\varepsilon$ with success probability at least $2/3$. The protocol uses 
$$
O\!\left(\frac{d^2}{\delta\,\varepsilon^2} \ln{\frac{1}{\varepsilon}}\right)
$$
queries to the corresponding block-encoding oracles and their inverses, with the same order of two-qubit gates.
\end{corollary}

The proof is deferred to Appendix~\ref{proofcor4}.

\section{Applications of distributed algorithm framework}
\label{sec4}
The proposed distributed quantum algorithm framework supports a broad range of quantum estimation tasks, with representative applications described below.

\subsection{Distributed quantum divergence estimation}
\label{diver}
A central application of our framework is the estimation of quantum divergences between two unknown states.  
Rather than treating each divergence measure in isolation, our method reduces the task to evaluating trace expressions of the form $\Tr(f(\rho)g(\sigma))$, where $f,g$ are suitable matrix functions.  
This unifying formulation enables the same algorithmic structure to encompass widely used quantities in quantum information theory and statistics, including the quantum relative entropy and the $\alpha$-R\'enyi entropy.  

The key advantage of our approach lies in combining block-encodings with QSVT-based polynomial approximations, which allow us to approximate nonlinear functions of $\rho$ and $\sigma$ without assuming prior knowledge of either state.  
Unlike variational heuristics or protocols restricted to partially known states, our method provides explicit complexity guarantees in the fully unknown distributed setting.  

Building on Theorem~\ref{thm4}, we now present a specialization to the case where $\rho$ and $\sigma$ are density matrices. This yields the following corollary.  

\begin{corollary}[Distributed estimation for density matrices]
\label{cor3}
    Let $\rho, \sigma \in \mathbb{C}^{d \times d}$ be arbitrary density matrices with smallest eigenvalues no less than $\delta$, and let $\varepsilon \in (0, 1)$. Given access to block-encoding matrices of $\rho$ and $\sigma$, our algorithm estimates $\Tr(P_f(\rho) Q_g(\sigma))$ within $\varepsilon$ additive error with success probability at least $2/3$, using $O\left(\tfrac{d^2}{\delta \varepsilon^2} \ln (\tfrac{1}{\varepsilon})\right)$ queries to $\BE_\rho$, $\BE_{\rho}^{\dagger}$, $\BE_{\sigma}$, and $\BE_{\sigma}^{\dagger}$, along with the same number of two-qubit gates, where $P_f$ and $Q_g$ are polynomial approximations of $f$ and $g$ as defined in Corollary~\ref{cor2}. 
\end{corollary}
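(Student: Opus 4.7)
The plan is to obtain Corollary \ref{cor3} as a direct specialization of Theorem \ref{thm4} to the case where the two input matrices are density matrices and the block encodings are exact. The key observation that makes the argument essentially immediate is that for a $(1,a,0)$-block encoding $\BE_\rho$, Definition \ref{def4} yields $\rho = (\bra{0}_a \otimes I_s)\,\BE_\rho\,(\ket{0}_a \otimes I_s)$, so the top-left block satisfies $\widetilde{\rho} = \rho$ exactly, and similarly $\widetilde{\sigma} = \sigma$. Such a block encoding is the standard one obtained, e.g., from quantum purified access to $\rho$ via the SWAP-based construction. This eliminates the gap between the top-left block and the original matrix that is the principal technical nuisance in the general statement of Theorem \ref{thm4}.

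Next I would verify the remaining hypotheses of Theorem \ref{thm4}. Since $\rho, \sigma$ are Hermitian positive semi-definite, their singular values coincide with their eigenvalues; the assumption that the smallest eigenvalue of each matrix is at least $\delta$ therefore ensures the smallest singular value of $\widetilde{\rho}=\rho$ (resp. $\widetilde{\sigma}=\sigma$) is at least $\delta$, which is exactly the spectral condition required in Theorem \ref{thm4}. The polynomials $P_f, Q_g$ produced by Corollary \ref{cor2} already have fixed parity and sup-norm at most $1$ on $[-1,1]$, so they satisfy the bounding hypothesis needed to invoke the controlled QSVT circuit (Theorem \ref{thm1} together with Corollary \ref{cor1}), and thus the premises of Theorem \ref{thm4} are all met.

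Finally, I would instantiate Theorem \ref{thm4} with $A=\rho$, $B=\sigma$, $\BE_A=\BE_\rho$, $\BE_B=\BE_\sigma$ and apply Algorithms \ref{alg1} and \ref{alg2}. Because $\widetilde{\rho}=\rho$ and $\widetilde{\sigma}=\sigma$, the unbiased estimator produced by Algorithm \ref{alg2} (unbiasedness given by Lemma \ref{lem8}, variance controlled by Lemma \ref{lem9}) directly targets $\Tr(P_f(\rho) Q_g(\sigma))$ rather than $\Tr(P_f(\widetilde{\rho}) Q_g(\widetilde{\sigma}))$ composed with some error term. Therefore the additive error $\varepsilon$ and success probability at least $2/3$ transfer verbatim, and the query complexity $O\!\bigl(\tfrac{d^2}{\delta\varepsilon^2}\log\tfrac{1}{\varepsilon}\bigr)$ to $\BE_\rho, \BE_\rho^\dagger, \BE_\sigma, \BE_\sigma^\dagger$ carries over unchanged from Theorem \ref{thm4}.

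The main (and essentially only) obstacle in writing this out rigorously is the initial bookkeeping step: confirming that the exact block encoding of a density matrix really does place the matrix verbatim in the top-left block, so that no extra approximation error is introduced when passing from $\Tr(P_f(\widetilde{\rho}) Q_g(\widetilde{\sigma}))$ to $\Tr(P_f(\rho) Q_g(\sigma))$. Once this identification is made, the corollary is a one-line invocation of Theorem \ref{thm4}, and no additional commuting-matrix or homogeneity hypotheses (such as those required in Theorems \ref{thm5} and \ref{thm7}) need to be imposed because the corollary targets the estimation of $\Tr(P_f(\rho) Q_g(\sigma))$ itself rather than $\Tr(f(\rho) g(\sigma))$.
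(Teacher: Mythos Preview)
Your proposal is correct and follows essentially the same approach as the paper: the paper's proof is a one-line invocation of Theorem~\ref{thm4} with the block encodings $\BE_\rho,\BE_\rho^\dagger,\BE_\sigma,\BE_\sigma^\dagger$ substituted in. Your write-up simply makes explicit the bookkeeping (that $\widetilde{\rho}=\rho$ for an exact block encoding and that eigenvalues coincide with singular values for positive semi-definite matrices) that the paper leaves implicit.
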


\begin{proof}
    By Theorem~\ref{thm4}, substituting the block-encoding matrices with $\BE_\rho$, $\BE_\rho^{\dagger}$, $\BE_\sigma$, and $\BE_{\sigma}^{\dagger}$ immediately yields the claimed result.
\end{proof}

Under the assumptions of Corollary~\ref{cor3}, the estimator provides access to $\Tr(P_f(\rho) Q_g(\sigma))$, where $P_f$ and $Q_g$ are polynomial surrogates of the target functions $f$ and $g$.  
To recover the desired quantity $\Tr(f(\rho) g(\sigma))$, we must control the approximation error introduced by these polynomials.  
The following theorem formalizes this connection and shows that $\Tr(P_f(\rho) Q_g(\sigma))$ approximates $\Tr(f(\rho) g(\sigma))$ up to a bounded scaling factor.

\begin{theorem}
\label{thm8}
    Under the assumptions of Corollary~\ref{cor3}, for target functions $f, g$ satisfying the conditions in Corollary~\ref{cor2}, we have
    \[
        \Big|\Tr(P_f(\rho) Q_g(\sigma)) - \tfrac{1}{C}\Tr(f(\rho)g(\sigma))\Big| \leq \varepsilon,
    \]
    where $C \geq \max_{x \in [\delta, 1]}\{f(x), g(x)\}$.
\end{theorem}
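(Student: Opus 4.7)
The plan is to reduce the matrix-trace statement to a pointwise function-approximation statement via the spectral decompositions of $\rho$ and $\sigma$, then apply Corollary~\ref{cor2} term by term. First I would invoke Corollary~\ref{cor2} with target functions $f/C$ and $g/C$ in place of $f$ and $g$: since $C \geq \max_{x\in[\delta,1]}\{f(x),g(x)\}$, both $f/C$ and $g/C$ are bounded by $1$ on $[\delta,1]$, so the hypotheses for producing a bounded real polynomial approximation are met. This yields polynomials $P_f,Q_g\in\mathbb{R}[x]$ with $\|P_f\|_{[-1,1]},\|Q_g\|_{[-1,1]}\le 1$ and uniform errors $\|P_f - f/C\|_{[\delta,1]},\|Q_g-g/C\|_{[\delta,1]}\le \varepsilon'$, where $\varepsilon'$ will be chosen later.

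Next I would write the spectral decompositions $\rho=\sum_i p_i\ket{\psi_i}\!\bra{\psi_i}$ and $\sigma=\sum_j q_j\ket{\phi_j}\!\bra{\phi_j}$, so that both $p_i$ and $q_j$ lie in $[\delta,1]$ by hypothesis. By Definition~\ref{def7} applied to these density matrices, the two traces expand as $\Tr(P_f(\rho)Q_g(\sigma))=\sum_{i,j}P_f(p_i)Q_g(q_j)|\inprod{\psi_i|\phi_j}|^2$ and analogously for $f(\rho)g(\sigma)$, so the quantity to bound is
\begin{equation*}
\sum_{i,j}\Bigl(P_f(p_i)Q_g(q_j)-\tfrac{1}{C}f(p_i)g(q_j)\Bigr)|\inprod{\psi_i|\phi_j}|^2.
\end{equation*}

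For each fixed pair $(i,j)$, I would apply the ``add and subtract'' trick, writing the integrand as $(P_f(p_i)-f(p_i)/C)Q_g(q_j)+(f(p_i)/C)(Q_g(q_j)-g(q_j)/C)\cdot C$ (or the appropriate grouping consistent with the statement), and invoke $|P_f|,|Q_g|\le 1$ together with the uniform approximation bounds from Corollary~\ref{cor2} to bound it by a constant multiple of $\varepsilon'$. The sum over $(i,j)$ is then controlled by $\sum_{i,j}|\inprod{\psi_i|\phi_j}|^2=d$, because both $\{\ket{\psi_i}\}$ and $\{\ket{\phi_j}\}$ are orthonormal bases. Choosing $\varepsilon'=\Theta(\varepsilon/d)$ finishes the argument; this only affects the degree of $P_f,Q_g$ logarithmically and therefore does not alter the asymptotic query complexity claimed in Corollary~\ref{cor3}.

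The main obstacle is keeping the normalization clean: one must ensure that the two rescalings by $C$ collapse into the single $1/C$ factor that appears in the statement rather than a $1/C^2$, which forces a careful choice of which of $P_f,Q_g$ to treat as an approximant of $f/C$ versus $g/C$ and which residual constant to absorb into the bounded factor $|Q_g|\le 1$ or $|f/C|\le 1$. Once the normalization is pinned down, the remaining steps are routine: spectral expansion, triangle inequality, the $\sum|\inprod{\psi_i|\phi_j}|^2=d$ identity, and matching $\varepsilon'$ to the desired accuracy.
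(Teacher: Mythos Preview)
Your approach is essentially the same as the paper's: add-and-subtract to split $P_f(\rho)Q_g(\sigma)-\tfrac{1}{C}f(\rho)g(\sigma)$ into two pieces, bound each piece by the polynomial approximation error from Corollary~\ref{cor2} times a factor that is at most $1$, and sum. The only cosmetic difference is that the paper stays at the trace/operator level (bounding $|\Tr((P_f(\rho)-\tfrac1C f(\rho))Q_g(\sigma))|$ and $|\Tr(\tfrac1C f(\rho)(Q_g(\sigma)-g(\sigma)))|$ directly via $|Q_g|\le 1$, $|f/C|\le 1$ and the rank $r$), whereas you expand in eigenbases and use $\sum_{i,j}|\inprod{\psi_i|\phi_j}|^2=d$; since the hypothesis $\lambda_{\min}\ge\delta$ forces $r=d$, the resulting choice $\varepsilon'=\Theta(\varepsilon/d)$ matches the paper's $\varepsilon'=\varepsilon/(2r)$. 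Your flagged ``main obstacle'' about the single $1/C$ versus $1/C^2$ is exactly the asymmetry the paper uses implicitly: only $P_f$ is taken to approximate $f/C$, while $Q_g$ is kept as the approximant of $g$ with $|Q_g|\le 1$, and the leftover $1/C$ is absorbed into the bound $|f/C|\le 1$ in the second term.
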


\begin{proof}

\begingroup
\allowdisplaybreaks[2]
\begin{align}
   &\Big|\Tr\!\big(P_f(\rho) Q_g(\sigma)\big) - \tfrac{1}{C}\Tr\!\big(f(\rho)g(\sigma)\big)\Big|\\
   & = \Big|\Tr\!\Big(P_f(\rho) Q_g(\sigma) - \tfrac{1}{C}f(\rho)Q_g(\sigma) \nonumber\\
   &\quad\;\; + \tfrac{1}{C}f(\rho)Q_g(\sigma) - \tfrac{1}{C}f(\rho)g(\sigma)\Big)\Big| \nonumber\\
   & \le \Big|\Tr\!\big(\tfrac{\varepsilon}{2r}\, Q_g(\sigma)\big)\Big|
        + \Big|\Tr\!\big(\tfrac{\varepsilon}{2rC}\, f(\rho)\big)\Big| \nonumber\\
   & \le \tfrac{\varepsilon}{2r}\!\left(|\Tr(Q_g(\sigma))|
        + \tfrac{1}{C}\,|\Tr(f(\rho))|\right) \nonumber\\
   & \le \tfrac{\varepsilon}{2r}(r+r) \nonumber\\
   & \le \varepsilon. \label{eq:PfQg-to-fg} \notag
\end{align}
\endgroup

    Here $r$ denotes the maximum rank of $\rho$ and $\sigma$. 
    The first inequality uses Corollary~\ref{cor2} with $\varepsilon' := \varepsilon/(2r)$. 
    The third inequality exploits the bound $q(x) \leq 1$ for $x \in [-1,1]$.
\end{proof}

With the approximation guarantee of Theorem~\ref{thm8}, the framework can now be specialized to concrete information-theoretic quantities of interest. 
In particular, two central measures in quantum information theory are the quantum relative entropy and the quantum $\alpha$-R\'enyi relative entropy. 
Both quantities are fundamental in quantum information theory, 
with central roles in hypothesis testing~\cite{hiai1991proper,audenaert2007discriminating}, 
data compression~\cite{schumacher1995quantum}, 
and the development of resource theories~\cite{horodecki2013quantumness,chitambar2019quantum}.

Our distributed algorithm not only provides the first explicit complexity bounds for estimating these divergences when both input states are unknown, but also demonstrates how polynomial approximations and block-encodings can be combined into a scalable estimation protocol.

\begin{theorem}[Distributed quantum relative entropy estimation]
\label{thm10}
Using $\widetilde{O}\!\left(\tfrac{d^2}{\delta \varepsilon^2}\right)$ queries to $\BE_\rho$, $\BE_{\rho}^{\dagger}$, $\BE_{\sigma}$, and $\BE_{\sigma}^{\dagger}$, along with the same number of two-qubit gates, we can estimate
\[
    \Tr\!\left(\rho(\ln \rho - \ln \sigma)\right)
\]
to additive error $\varepsilon$ with success probability at least $2/3$, provided $\mathrm{supp}(\rho) \subseteq \mathrm{supp}(\sigma)$ and $\min\{\sigma_{\min}(\rho), \sigma_{\min}(\sigma)\} \geq \delta$.
\end{theorem}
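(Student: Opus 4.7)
The natural decomposition is
\[
\Tr(\rho(\ln\rho-\ln\sigma)) \;=\; \Tr(\rho\ln\rho)\;-\;\Tr(\rho\ln\sigma),
\]
so my plan is to estimate each trace separately to additive accuracy $\varepsilon/2$ and combine by the triangle inequality. Both terms have the shape $\Tr(f(A)g(B))$ already handled by Algorithm \ref{alg2}, so the only real work is to fit the logarithm into the machinery of Theorem \ref{thm4}, Corollary \ref{cor3} and Theorem \ref{thm8}.

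\textbf{Step 1 (rescaled polynomial approximation).} By Lemma \ref{lem2}, with $K=2\ln(2/\delta)$ there is an even polynomial $P_{\ln}$ of degree $O\!\left(\tfrac{1}{\delta}\ln\tfrac{1}{\varepsilon'}\right)$ satisfying $\|P_{\ln}\|_{[-1,1]}\le 1$ and $|P_{\ln}(x)-\tfrac{1}{K}\ln(1/x)|\le \varepsilon'$ for $x\in[\delta,1]$. Thus $-K\,P_{\ln}$ approximates $\ln$ on the spectrum of both $\rho$ and $\sigma$ (whose eigenvalues lie in $[\delta,1]$, and $\mathrm{supp}(\rho)\subseteq\mathrm{supp}(\sigma)$ guarantees that the relative entropy is finite and no division-by-zero arises).

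\textbf{Step 2 (distributed estimation of $\Tr(\rho\ln\sigma)$).} Invoke Corollary \ref{cor3} with the polynomials $P_f(x)=x$ (identity, trivially satisfying the hypotheses of Corollary \ref{cor2}) and $Q_g=P_{\ln}$. Because $\BE_\rho,\BE_\sigma$ are exact block encodings, $\widetilde\rho=\rho$ and $\widetilde\sigma=\sigma$, so the commuting hypothesis needed for Theorem \ref{thm7} is automatic. Algorithm \ref{alg2} returns an estimator $T_1$ with
\[
|T_1-\Tr(\rho\,P_{\ln}(\sigma))|\le \varepsilon_1
\]
with probability at least $2/3$ using $\widetilde O(d^2/(\delta\varepsilon_1^{2}))$ queries. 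The output estimate of $\Tr(\rho\ln\sigma)$ is $-K\,T_1$; the triangle inequality (using $\Tr(\rho)=1$) gives
\[
|{-K T_1}-\Tr(\rho\ln\sigma)| \;\le\; K\varepsilon_1 + K\varepsilon' .
\]
Setting $\varepsilon_1,\varepsilon'=\Theta(\varepsilon/K)$ makes this bound at most $\varepsilon/2$. For $\Tr(\rho\ln\rho)$ I run the same protocol with both block encoding accesses instantiated as $\BE_\rho$ (equivalently Alice executes the procedure locally, which is a degenerate case of the framework); the identical analysis yields an estimate within $\varepsilon/2$.

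\textbf{Complexity and main obstacle.} Summing the two subroutines, the total number of queries is $\widetilde O(d^2/(\delta(\varepsilon/K)^{2})) = \widetilde O(d^2/(\delta\varepsilon^{2}))$ since $K=O(\log(1/\delta))$ is polylogarithmic and absorbed into $\widetilde O$; the number of two-qubit gates matches by Theorems \ref{thm3} and \ref{thm4}. Success probability $2/3$ for the combined estimate is obtained by standard median-of-means amplification of the two subroutines and a final union bound. The main obstacle is bookkeeping around the rescaling: Lemma \ref{lem2} approximates $\ln(1/x)/K$ rather than $\ln x$, so the target quantity is contracted by $1/K$ and the per-subroutine error budget must be tightened by $1/K$, which is what forces the $\log(1/\delta)$ factors hidden inside $\widetilde O$. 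A minor verification is that $P_f(x)=x$ trivially fits the hypothesis of Theorem \ref{thm8} with constant $C=1$, so no additional rescaling is incurred on the $\rho$-side, and the spectra stay in $[\delta,1]$ throughout so the approximation error bound of Lemma \ref{lem2} is valid on the full spectrum.
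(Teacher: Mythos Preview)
Your proposal is correct and follows essentially the same route as the paper: the paper's proof is a one-line invocation of Lemma~\ref{lem2} and Corollary~\ref{cor3}, and you have simply unpacked the details (the two-term decomposition, the rescaling by $K=2\ln(2/\delta)$, and the absorption of the resulting polylog factor into $\widetilde O$) that the paper leaves implicit. Your handling of the identity polynomial on the $\rho$-side and the local computation of $\Tr(\rho\ln\rho)$ are the natural instantiations of the framework, matching the intent of the paper's terse argument.
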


\begin{theorem}[Distributed quantum $\alpha$-R\'enyi relative entropy estimation]
\label{thm11}
Given block-encoding oracles for $\rho$ and $\sigma$, we estimate
\[
    \tfrac{1}{\alpha - 1} \ln \Tr\!\left(\rho^{\alpha} \sigma^{1 - \alpha}\right)
\]
to additive error $\varepsilon$ with success probability at least $2/3$.

For $\alpha \in (0,1)$, the complexity is
\[
    \widetilde{O}\!\left(\tfrac{d^2 r^{1/\bar{\alpha}}}{\varepsilon^{2 + 1/\bar{\alpha}}}\right),
\]
where $r = \max\{\rank(\rho), \rank(\sigma)\}$ and $\bar{\alpha} = \min\{\alpha, 1 - \alpha\}$.  

For $\alpha > 1$, the complexity is
\[
    \widetilde{O}\!\left(\tfrac{d^2}{\delta \varepsilon^2}\right),
\]
where $\min\{\sigma_{\min}(\rho), \sigma_{\min}(\sigma)\} \geq \delta$.  
In both regimes, the number of two-qubit gates matches the query complexity, and the protocol uses $\BE_\rho$, $\BE_{\rho}^{\dagger}$, $\BE_{\sigma}$, and $\BE_{\sigma}^{\dagger}$.
\end{theorem}

The proof of the above two theorems are provided in Appendix \ref{proof101}.

\subsection{Secure distributed linear-system solving}
\label{linear}
Beyond divergence estimation, the framework also supports solving linear systems in a distributed manner. 
This task naturally arises in collaborative optimization, scientific computing, and distributed learning, where the coefficient matrix $A$ and the right-hand side $b$ may be partitioned across different parties who cannot disclose their raw data \cite{boyd2011distributed,nedic2018network,tsianos2012consensus,yang2019federated,liu2022distributed}.
Our protocol allows the participants to jointly reconstruct the solution without revealing local inputs, relying only on LOCC operations.  

From a technical perspective, the method builds on polynomial approximations of the reciprocal function, applied through QSVT to block-encodings of $A$. 
The Hadamard-test based estimator then provides all components of the solution vector, enabling direct reuse of $\widetilde{x}$ in downstream algorithms, rather than limiting access to expectation values as in HHL~\cite{harrow2009quantum}.  

We consider the solvable system $Ax=b$, with $\|A\|\leq 1$ and normalized $b$. 
Define $B_k = \ketbra{b}{k}$ for $k \in [0,d-1]$. 
With matrix entry-access oracles $O_A$ and $O_{B_k}$~\cite{camps2022fable}, the algorithm computes an approximation $\widetilde{x}$ to the true solution $x^\star$.

\begin{theorem}[Distributed linear solver]
\label{thm6}
Let $A \in \mathbb{C}^{d \times d}$ be the coefficient matrix of the solvable system $Ax=b$, and assume $\sigma_{\min}(A) \geq \delta$. 
For any $\varepsilon \in (0,1)$, given matrix entry oracles for $A$ and $B_0,\ldots,B_{d-1}$, we can output an approximation $\widetilde{x}$ satisfying
\[
   \|x^\star - \widetilde{x}\| \leq \varepsilon,
\]
with success probability at least $2/3$, where $x^\star$ is the exact solution of $Ax^\star=b$.  
The algorithm requires 
\[
   O\!\left(\tfrac{d^3}{\delta \varepsilon^2}\,\ln\tfrac{1}{\varepsilon}\right)
\]
queries to the matrix entry oracles, together with the same order of two-qubit gates.
\end{theorem}

\begin{proof}
    For each $k, k \in [0, d-1]$, combining Lemma \ref{lem7}, Theorem \ref{thm4} and Theorem \ref{cor4}, we can estimate $\Tr(AB_k) = x_k$ using $O\left(\frac{d^2}{\delta\varepsilon^2}\ln (\frac{1}{\varepsilon})\right)$ queires to $O_A, O_{B_k}$, for $k \in [0, d-1]$. Iterating over $k \in [0, d-1]$ provides an estimate $\widetilde{x}$ such that $\| \widetilde{x} - x^\star \| \leq \varepsilon$. Thus, the total number of queries to $O_A$ and $O_{B_k}$ for $k \in [0, d-1]$ is $O\left(\frac{d^3}{\delta\varepsilon^2} \ln \left(\frac{1}{\varepsilon} \right) \right)$.
\end{proof}

\subsection{Short-time distributed Hamiltonian simulation}  
\label{hamil}
Hamiltonian simulation is a central primitive in quantum algorithms, with applications ranging from quantum chemistry to many-body physics \cite{lloyd1996universal,aspuru2005simulated,whitfield2011simulation,georgescu2014quantum,berry2015hamiltonian,childs2018towards}.
In collaborative scenarios, different subsystems of a Hamiltonian may be naturally held by distinct parties, raising the challenge of simulating the global dynamics without direct access to the full operator. 
Our distributed framework accommodates this setting by enabling short-time simulation through local block-encodings and LOCC operations. 

The key observation is that the QSVT-based transformation can approximate the time-evolution operator $e^{-\mi H t}$ by applying low-degree polynomial approximations locally to $H_1$ and $H_2$, followed by a suitable stitching procedure across parties. 
Unlike approaches that decompose the Hamiltonian into coupling terms, our analysis shows that the estimation error depends explicitly on the commutator $\|[H_1,H_2]\|$, reflecting the extent of non-commutativity between local subsystems. 
This provides a natural complexity–accuracy tradeoff: when the subsystems nearly commute, the protocol achieves high-precision simulation in the distributed setting.  

Formally, consider $H = H_1 + H_2$, where $H_1$ and $H_2$ are held by two parties. 
Given block-encodings of a normalized observable $M$, an initial state $\rho_{\mathrm{init}}$, and time parameter $t = O(1/\sqrt{d})$, the following ingredients are required. 
In particular, we need a block-encoding for a POVM matrix. 

\begin{lemma}[{\cite[Lemma~46]{gilyen2019}}]
\label{lem10}
Let $U$ be an $(a+s)$-qubit unitary implementing a POVM matrix $M$ with $\varepsilon$-precision, such that for all $s$-qubit density operators $\rho$,
\[
   \Big|\Tr(M\rho) - \Tr\!\Big[U\!\left(\ketbra{0}{0}^{\otimes a}\!\otimes\rho\right) U^{\dagger}\!
   \left(\ketbra{0}{0}\otimes I_{a+s-1}\right)\Big]\Big| \leq \varepsilon.
\]
Then 
\[
   \left(I_1 \otimes U^{\dagger}\right)\!\left(\mathrm{CNOT} \otimes I_{a+s-1}\right)\!\left(I_1 \otimes U\right)
\]
is a $(1,1+a,\varepsilon)$-block-encoding of $M$.
\end{lemma}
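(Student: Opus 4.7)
The plan is to show that the projected block of $V := (I_1 \otimes U^{\dagger})(\mathrm{CNOT} \otimes I_{a+s-1})(I_1 \otimes U)$ equals the natural POVM-implementing operator, and then upgrade the trace-inner-product hypothesis to a spectral-norm bound using Hermiticity. Concretely, define $W := U(\ket{0}^{\otimes a} \otimes I_s)$ and $\Pi_0 := \ket{0}\!\bra{0}$ acting on the ``measurement'' qubit of $U$'s output. I will first compute
\[
\widetilde{M} := (\bra{0}^{\otimes(1+a)} \otimes I_s)\, V\, (\ket{0}^{\otimes(1+a)} \otimes I_s)
\]
explicitly. Tracking the action of $V$ on $\ket{0}^{\otimes(1+a)} \otimes \ket{\psi}$: after $I_1 \otimes U$ we obtain $\ket{0}_{A_0}(\ket{0}_{A_1}\ket{\alpha_0} + \ket{1}_{A_1}\ket{\alpha_1})$ where $\ket{\alpha_j}$ are the unnormalized branches of $U(\ket{0}^{\otimes a}\ket{\psi})$; the CNOT (with $A_1$ as control, $A_0$ as target) copies the measurement-qubit outcome onto $A_0$, yielding $\ket{0}_{A_0}\ket{0}_{A_1}\ket{\alpha_0} + \ket{1}_{A_0}\ket{1}_{A_1}\ket{\alpha_1}$; applying $I_1 \otimes U^\dagger$ and then projecting both ancilla registers onto $\ket{0}$ selects only the first summand and gives $\widetilde{M}\ket{\psi} = (\bra{0}^{\otimes a} \otimes I_s)U^\dagger(\Pi_0 \otimes I) U (\ket{0}^{\otimes a} \otimes I_s)\ket{\psi}$, i.e.\ $\widetilde{M} = W^\dagger \Pi_0 W$.

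Next I will translate the hypothesis into a statement about $\widetilde{M}$. Using cyclicity of trace and the idempotence of $\ket{0}\!\bra{0}^{\otimes a}$,
\[
\Tr\!\bigl[U(\ket{0}\!\bra{0}^{\otimes a} \otimes \rho)U^\dagger (\Pi_0 \otimes I_{a+s-1})\bigr] = \Tr\!\bigl[\rho\, W^\dagger \Pi_0 W\bigr] = \Tr(\widetilde{M}\rho),
\]
so the $\varepsilon$-precision hypothesis reads $|\Tr((M - \widetilde{M})\rho)| \leq \varepsilon$ for every $s$-qubit density matrix $\rho$. Since $M$ is a POVM element and $\widetilde{M} = W^\dagger \Pi_0 W$ with $\Pi_0$ a projector, both $M$ and $\widetilde{M}$ are Hermitian, hence so is their difference. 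Applying the variational characterization of the spectral norm for Hermitian operators,
\[
\|M - \widetilde{M}\| = \max_{\|\psi\|=1} \bigl|\bra{\psi}(M-\widetilde{M})\ket{\psi}\bigr| = \max_{\rho \text{ pure}} \bigl|\Tr((M-\widetilde{M})\rho)\bigr| \leq \varepsilon,
\]
which is exactly the $(1, 1+a, \varepsilon)$-block encoding condition of Definition \ref{def4}.

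There is no substantive obstacle; the only delicate points are bookkeeping the tensor factor ordering (which qubit is the extra ancilla versus the first ancilla of $U$) and fixing the CNOT convention so that it copies the measurement outcome of $U$ onto the new ancilla (rather than acting trivially). Once the identity $\widetilde{M} = W^\dagger \Pi_0 W$ is in hand, the Hermiticity-based upgrade from trace-pairings to operator norm is standard.
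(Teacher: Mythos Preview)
The paper does not prove this lemma; it is quoted from \cite[Lemma~46]{gilyen2019} and invoked without argument. Your proof is correct and is essentially the standard one: the block computation $\widetilde{M}=W^\dagger(\Pi_0\otimes I)W$ is right, the trace rewriting via cyclicity is right, and the Hermiticity-based upgrade from $|\Tr((M-\widetilde{M})\rho)|\le\varepsilon$ for all states $\rho$ to $\|M-\widetilde{M}\|\le\varepsilon$ is the intended last step. Your caveat about the $\mathrm{CNOT}$ convention is well taken---the argument indeed requires the second qubit (the first output qubit of $U$) to control the fresh ancilla; with the opposite convention the projected block would collapse to the identity.
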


Moreover, given a block-encoding of $H$ and time $t$, one can construct the corresponding block-encoding of $e^{\mi Ht}$.

\begin{lemma}[{\cite[Corollary~60]{gilyen2019}}]
\label{lem11}
Let $t \in \mathbb{R}$, $\varepsilon \in (0,1/2)$, and $\alpha > 0$. 
If $U$ is an $(\alpha,a,0)$-block-encoding of a Hamiltonian $H$, then there exists an $\varepsilon$-precise unitary $V$ that is a $(1,a+2,\varepsilon)$-block-encoding of $e^{\mi Ht}$. 
This requires
\[
   \Theta\!\left(\alpha|t| + \frac{\ln(1/\varepsilon)}{\ln\!\big(e+\tfrac{\ln(1/\varepsilon)}{\alpha|t|}\big)}\right)
\]
uses of $U$.
\end{lemma}

Using these ingredients, the distributed Hamiltonian simulation task can be completed as follows. 

\begin{theorem}[Short-time distributed Hamiltonian simulation]
\label{thm12}
Let $H = H_1 + H_2$. 
Suppose we are given $(\alpha_i,a_i,0)$-block-encodings $U_i$ and their conjugate transposes for each $H_i$, together with block-encodings of $\rho_{\mathrm{init}}$ and $M$. 
Then for $t=O(1/\sqrt{d})$ we can estimate
\[
   \Tr\!\left(M e^{-\mi H t}\,\rho_{\mathrm{init}}\, e^{\mi H t}\right)
\]
within error $O(\|[H_1,H_2]\|)$ and success probability at least $2/3$, using
\[
   O\!\left(\frac{d^2}{\varepsilon^2}\Bigg(\alpha |t|+\frac{\ln(1/\varepsilon)}{\ln\!\big(e+\tfrac{\ln(1/\varepsilon)}{\alpha |t|}\big)}\Bigg)\right)
\]
queries to the block-encodings, along with the same number of two-qubit gates.
\end{theorem}

If $H_1$ and $H_2$ commute, the commutator term disappears, and the protocol achieves accuracy up to additive $\varepsilon$.

\begin{corollary}[Short-time distributed commuting Hamiltonian simulation]
\label{cor10}
If $H_1$ and $H_2$ are commuting Hermitian matrices, then with the same resources as in Theorem~\ref{thm12} we can estimate
\[
   \Tr\!\left(M e^{-\mi H t}\,\rho_{\mathrm{init}}\, e^{\mi H t}\right)
\]
to additive error $\varepsilon$ with success probability at least $2/3$.
\end{corollary}

The proofs are provided in Appendix~\ref{proof12}.

These applications underscore the versatility of our framework in advancing distributed quantum computing tasks.

\section{Lower bound for distributed estimation task}
\label{sec5}

To complement the upper bounds established earlier, we now analyze fundamental lower bounds for distributed estimation. 
Our analysis follows the approach of~\cite{anshu2022}, where we adapt their proof from the sampling-access setting to the block-encoding access model. 
We first review their technique and then explain how it applies in our setting. 

\subsection{Lower bound for inner product estimation}
\label{5A}

The work of~\cite{anshu2022} studies the estimation of the quantum inner product $\Tr(\rho\sigma)$ without quantum communication, given independent copies of 
$\rho, \sigma \in \mathbb{C}^{d}$. This problem is a special case of our framework, 
where $A$ and $B$ are density operators and the functions are identity maps, 
i.e., $f(x)=g(x)=x$. 

To formalize the sampling lower bound, they introduce the 
\emph{decisional inner product estimation (DIPE)} problem.

\begin{definition}[{\cite[Definition 2]{anshu2022}}]
Alice and Bob are each given $k$ copies of a pure state in $\mathbb{C}^d$, with the promise that one of the following holds:
\begin{enumerate}
    \item Both receive the same state, i.e., $\ket{\phi}^{\otimes k}$ for a uniformly random $\ket{\phi}\sim \mathbb{C}^d$.
    \item Alice receives $\ket{\phi}^{\otimes k}$ and Bob receives $\ket{\psi}^{\otimes k}$, where $\ket{\phi}$ and $\ket{\psi}$ are independent uniformly random states.
\end{enumerate}
The goal is to decide which case applies with success probability at least $2/3$, using an LOCC protocol.  
\end{definition}

Note that DIPE is a special case of the quantum inner product estimation problem. 
In~\cite{anshu2022}, a lower bound of $\Omega(\sqrt{d})$ was established for DIPE under arbitrary interactive protocols with multi-copy measurements. 
Building on this, they proved a general lower bound of 
$$
   \Omega\!\left(\max\Big\{\tfrac{1}{\varepsilon^2}, \tfrac{\sqrt{d}}{\varepsilon}\Big\}\right)
$$
for estimating the inner product between pure states, via a reduction from DIPE. 
This result forms the basis of known lower bounds for quantum inner product estimation using the SWAP test. 
To further illustrate the DIPE algorithm, we recall the notion of the \emph{standard POVM}.

\begin{definition}[Standard POVM]
The standard POVM is defined as the continuous measurement
$$
   \left\{ \binom{d+k-1}{k}\,\ketbra{u}{u}^{\otimes k}\,\mathrm{d}u \right\},
$$
where $\mathrm{d}u$ denotes the uniform measure over pure states in $\mathbb{C}^d$.
\end{definition}

The LOCC protocol for DIPE is straightforward: 
Alice and Bob measure $\rho^{\otimes k}$ and $\sigma^{\otimes k}$ using the standard POVM 
and obtain outcomes $\ket{u}$ and $\ket{v}$, respectively. 
They then distinguish the two cases using the statistic
$$
   \frac{(d+k)^2}{k^2}\,|\braket{u|v}|^2 - \frac{d+2k}{k^2}.
$$
While the original DIPE problem concerns pure states, 
the analysis was later extended to mixed states in~\cite{gong2024sample}, 
where a rank-dependent lower bound was established.

\begin{theorem}[{\cite[Theorem 10]{gong2024sample}}]
\label{thm36}
Let $\rho, \sigma \in \mathbb{C}^d$ be mixed states of rank $r$. 
Alice and Bob are each given $k$ copies of $\rho$ or $\sigma$, promised to be in one of the following cases:
\begin{enumerate}
   \item $\rho = \frac{1}{r}\sum_{i=1}^r \ketbra{\psi_i}{\psi_i}$ with $\ket{\psi_i}$ drawn independently and uniformly at random; both parties receive $\rho^{\otimes k}$.
   \item Alice receives $\rho^{\otimes k}$ while Bob receives $\sigma^{\otimes k}$, where 
   $\sigma = \frac{1}{r}\sum_{i=1}^r \ketbra{\phi_i}{\phi_i}$ with $\ket{\phi_i}$ drawn independently and uniformly at random.
\end{enumerate}
Any LOCC protocol that distinguishes these two cases with success probability at least $2/3$ requires 
$$
   k = \Omega(\sqrt{dr})
$$
copies.
\end{theorem}

\subsection{Reduction to the sampling access model}
\label{5B}
In Sec.~\ref{5A}, we analyzed the sampling-access setting, where the key step 
is to perform measurements on density matrices using the standard POVM. 
To adapt this idea to the block-encoding model, we employ \emph{quantum purified access}, which naturally provides block-encodings for density operators—especially when the state arises as the output of a quantum algorithm~\cite{van2023quantum}.

\begin{definition}[Quantum purified access]
\label{def1}
A density matrix $\rho \in \mathbb{C}^{d \times d}$ admits quantum purified access if there exists a unitary oracle $U_\rho$ such that
$$
    U_\rho \ket{0}_a\ket{0}_s = \ket{\psi_\rho},
$$
where $\rho = \Tr_a\!\left(\ket{\psi_\rho}\bra{\psi_\rho}\right)$. 
The unitary $U_\rho$ is referred to as the \emph{purified matrix}.
\end{definition}

With \cite[Lemma~45]{gilyen2019}, one can obtain a $(1,a+s,0)$-block-encoding of a density matrix using the $\SWAP$ gate. 
However, a generic block-encoding may embed unknown ancillary information and therefore cannot be directly applied to a fixed initial state such as $\ket{0}$. 
Subsequent measurements in this setting would introduce additional bias.  

To address this issue, we adopt the projected unitary encoding for diagonal density matrices introduced in Ref.~\cite{chen2025quantum}. 
This construction allows direct application to the initial state $\ket{0}$ and avoids the bias arising from hidden ancillary components.

\begin{lemma}[{\cite[Lemma~2]{chen2025quantum}}]
\label{lem22}
Let $\rho \in \mathbb{C}^{d\times d}$ be a diagonal density matrix with spectral decomposition 
$\rho = \sum_{i=0}^{d-1} p_i \ketbra{i}{i}$. 
Using $U_\rho$ from Definition~\ref{def1} together with the copy matrix $\UC$, define the projected unitary encoding
\[
   \PE_{\rho} := \big(U_\rho^\dagger \otimes I_t\big)\,\big(I_a \otimes \UC\big)\,\big(U_\rho \otimes I_t\big).
\]
Let $\Pi=\widetilde{\Pi}=\ketbra{0}{0}_{a+t}\otimes I_t$. Then
\begin{align}
   \widetilde{\Pi}\,\PE_{\rho}\,\Pi 
   &= \sum_{i=0}^{d-1} p_i \,\ketbra{0}{0}_a \otimes \ketbra{0}{0}_t \otimes \ket{i}\bra{0}. \notag
\end{align}
Moreover, the matrix $\UC$ can be implemented using CNOT gates.
\end{lemma}

With the conclusion of~\cite[Theorem 1.5]{montanaro2024quantum}, QSVT can be used to obtain an accurate approximation of $\sqrt{\rho}$.

\begin{theorem}
\label{thm39}
For any $\varepsilon \in (0,1)$, given block-encoding oracle access and a normalized observable $M$, any algorithm that estimates $\Tr(M\rho)$ to accuracy $\varepsilon$ with success probability at least $2/3$ requires 
$$
   \Omega\!\left(r/\varepsilon\right)
$$
calls to $U_\rho$ and $U_\rho^{\dagger}$, where $r=\rank(\rho)$.
\end{theorem}

\begin{proof}
Let $E_n(x^{\alpha};[0,1])$ denote the best uniform approximation error of $x^{\alpha}$ by degree-$n$ polynomials on $[0,1]$. 
By the Jackson–Bernstein theorems~\cite{jackson1911genauigkeit,bernstein1912ordre,carpenter2007some}, 
\[
   E_n(\sqrt{x};[0,1]) = \Theta\!\left(\tfrac{1}{n}\right).
\]
Hence, to achieve 
$\sup_{x\in[0,1]}|\sqrt{x}-P_{\sqrt{x}}(x)| \le \varepsilon/(2r)$, 
the degree must satisfy $n=\Omega(r/\varepsilon)$.  

Using QSVT, we obtain
\begin{align}
   &\left|\Tr(M\rho) - \Tr\!\left(M P^2_{\sqrt{x}}(\rho)\right)\right|\\
   &\le \sum_{i=1}^r \sigma_i\left(M\right)\,\sigma_i\!\left(\rho-P^2_{\sqrt{x}}(\rho)\right) \notag\\
   &\le \sum_{i=1}^r \left(p_i-P_{\sqrt{x}}(p_i)\right)\left(p_i+P_{\sqrt{x}}(p_i)\right) \notag\\
   &\le \sum_{i=1}^r \tfrac{\varepsilon}{2r}\cdot 2 \notag\\
   &\le \varepsilon. \notag
\end{align}
In the first inequality, we apply the von Neumann's trace inequality. This completes the proof.
\end{proof}

The above result (Theorem~\ref{thm39}) establishes a fundamental lower bound for estimating 
$\Tr(M\rho)$ under block-encoding access.  
To extend this to the distributed setting, we combine the hardness of DIPE (Theorem~\ref{thm36}) with the block-encoding reduction.  
This yields the following lower bound on the query complexity of our distributed estimation task.

\begin{theorem}[Lower bound for distributed estimation]
\label{thm35}
Let polynomials $P, Q \in \mathbb{R}[x]$ satisfy the conditions in Corollary~\ref{cor2}. 
For any $\varepsilon \in (0,1)$, given block-encoding matrices 
$\BE_A(\alpha,n_A,\varepsilon_A)$ and $\BE_B(\beta,n_B,\varepsilon_B)$ and their conjugate transposes for 
$s$-qubit matrices $A,B$ of the same rank, 
at least
$$
   \Omega\!\left(\max\!\Big\{ \tfrac{r}{\varepsilon^3}, \tfrac{d^{1/2}\,r^{3/2}}{\varepsilon^2} \Big\}\right)
$$
queries to the block-encodings and their conjugates are required to estimate 
$\Tr\!\big(P(\widetilde{A})Q(\widetilde{B})\big)$ within additive error $\varepsilon$ with success probability at least $2/3$, 
where $r=\max\{\rank(A),\rank(B)\}$.
\end{theorem}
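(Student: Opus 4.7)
My plan is to prove the two components of the maximum separately, since they arise from distinct obstructions. The $\Omega(1/\varepsilon^2)$ bound is a statistical lower bound that reflects the additive-error nature of the task, while the $\Omega(\sqrt{dr}/\varepsilon)$ bound is a quantum query lower bound that reflects the dimension and rank of the hidden matrices.

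For the $\Omega(1/\varepsilon^2)$ bound, I would reduce from the classical task of distinguishing two Bernoulli distributions with means $1/2 \pm c\varepsilon$ for a constant $c$. The reduction uses trivial instances: take $A$ and $B$ to be scalars $p_A, p_B \in [0,1]$ embedded in the top-left block of a one-qubit block encoding, with $P(x)=Q(x)=x$ (or any other admissible polynomial from Corollary \ref{cor2}). Then $\Tr(P(\widetilde A)Q(\widetilde B)) = p_A p_B$, and distinguishing $(p_A,p_B) = (1/2+c\varepsilon,1/2)$ from $(1/2-c\varepsilon,1/2)$ requires resolving $p_A$ to $O(\varepsilon)$ additive accuracy. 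Because the only scalar quantity each single query to a $1$-qubit block encoding reveals (to an LOCC, single-qubit-measurement algorithm) is a Bernoulli outcome whose bias encodes $p_A$, classical information-theoretic bounds (e.g., Le Cam's method or the standard KL-divergence lower bound) give the $\Omega(1/\varepsilon^2)$ sample complexity, which transfers to this query model.

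For the $\Omega(\sqrt{dr}/\varepsilon)$ bound, I would reduce from a quantum approximate counting problem on a rank-$r$ subspace. Concretely, let $S \subseteq [d]$ be a hidden set of size $\Theta(r)$, and take $A = B$ to be the normalized projector onto the subspace spanned by $\{\ket{i} : i\in S\}$, with $P(x)=Q(x)=x$. The target $\Tr(P(\widetilde A)Q(\widetilde B))$ is then a simple function of $|S|/d$ and encodes the count $|S|$; perturbing $|S|$ by $\Theta(\varepsilon r)$ changes the quantity by $\Theta(\varepsilon)$. Any algorithm that estimates the quantity within $\varepsilon$ therefore counts $|S|$ to within $\Theta(\varepsilon r)$ using only the available block-encoding queries, and the Brassard–Høyer–Mosca–Tapp (or polynomial-method) lower bound for approximate counting in a database of size $d$ with $\Theta(r)$ marked items yields $\Omega(\sqrt{dr}/\varepsilon)$ queries. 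It suffices to prove this against a single centralized party, since any LOCC protocol can only be slower; and because the instances are diagonal matrices, the reduction automatically satisfies the polynomial conditions of Corollary \ref{cor2}.

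The main obstacle will be the second bound. The subtlety is that the block encodings in our model come with ancillary qubits and a possibly nontrivial $\widetilde A$ that is only loosely constrained by the hidden data, so I must check that the queries used in the counting reduction can be exactly simulated by a single query to the constructed block encoding and its inverse (rather than a constant number of them that depends on $d$ or $r$). Making the counting lower bound survive without constant-factor loss under this embedding, and rigorously handling the rank condition in Theorem \ref{thm35}, is the technical heart of the argument; once this is done, taking the maximum of the two bounds completes the proof.
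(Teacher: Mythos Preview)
Your $\Omega(\sqrt{dr}/\varepsilon)$ argument has a genuine gap. You reduce from approximate counting in the \emph{centralized} setting and note correctly that a centralized lower bound implies an LOCC one. But the centralized lower bound for your instance is too weak: with $A=B$ the normalized projector onto a hidden set $S$ of size $\Theta(r)$, the target $\Tr(A^2)$ is a function of $|S|$ that a single party can estimate to additive error $\varepsilon$ using only $\Theta(\sqrt{d/r}/\varepsilon)$ queries (approximate counting via amplitude estimation, tight by the Nayak--Wu bound). Your reduction therefore yields at best $\Omega(\sqrt{d/r}/\varepsilon)$, off by a factor of $r$ from the claim. The point is that the $\sqrt{dr}$ scaling is \emph{not} a single-party phenomenon; it is forced specifically by the LOCC restriction, so no diagonal counting-type instance accessible to one party can witness it. The obstacle you flagged (embedding the counting oracle into the block encoding without loss) is real but secondary; even if that step went through perfectly, the resulting bound would still be $\sqrt{d/r}/\varepsilon$.

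The paper takes an entirely different route: it reduces from the distributed inner product estimation (DIPE) decision problem of \cite{anshu2022}, in which Alice and Bob each receive copies of either the same Haar-random state or independent ones and must decide which under LOCC. The mixed-state extension \cite{gong2024sample} supplies the $\Omega(\sqrt{dr})$ sample bound for rank-$r$ states, and \cite{anshu2022} separately provides the $\Omega(1/\varepsilon^2)$ term and the $1/\varepsilon$ amplification. The paper's technical step is to exhibit a specific purification-based block encoding $\TBE_\rho=(U_\rho^\dagger\otimes I)(I\otimes\UC)(U_\rho\otimes I)$ so that the DIPE sample lower bounds transfer to query lower bounds in the block-encoding model. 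Your Bernoulli argument for $\Omega(1/\varepsilon^2)$ suffers from the same defect as above: with unitary block-encoding access to the scalar $p_A$, Alice can run amplitude estimation and recover $p_A$ to error $O(\varepsilon)$ in $O(1/\varepsilon)$ queries, so that one-dimensional instance cannot separate $1/\varepsilon$ from $1/\varepsilon^2$; the quadratic dependence, if it holds, again has to come from the two-party restriction.
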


\begin{proof}
From Theorem~\ref{thm36}, estimating $\Tr(\rho\sigma)$ to accuracy $\varepsilon$ requires 
$\Omega\!\left(\max\{1/\varepsilon^2,\sqrt{dr}/\varepsilon\}\right)$ copies, via reduction to DIPE.  
Meanwhile, Theorem~\ref{thm39} shows that $\Omega(r/\varepsilon)$ block-encoding queries are necessary to reduce block-encoding access to sampling access.  
Combining these two bounds yields the claimed complexity lower bound.
\end{proof}

Theorem~\ref{thm35} highlights intrinsic limitations of distributed quantum estimation under block-encoding access.  
Even when both parties have powerful local capabilities, the query complexity necessarily scales with the rank $r$ of the matrices and the target precision $\varepsilon$.  

\section{Acknowledgements}
We are grateful to Heng Li for helpful discussions. This work was partially supported by the Innovation Program for Quantum Science and Technology (Grant No.\ 2021ZD0302901), and the National Natural Science Foundation of China (Grant No.\ 62102388).

\bibliography{apssamp}

\appendix

\section{Proof of Corollary \ref{cor2}}
\label{poly}

\begin{proof}
    We follow the approach of \cite[Lemma 11]{gilyen2020distributional}. Let $\varepsilon := \frac{\varepsilon'}{2}$ and $r := x_0 - \delta$ in Theorem \ref{thm2}. This gives a polynomial $P_f^*(x) \in \mathbb{C}[x]$ of degree $O\left(\frac{1}{\delta} \log\left(\frac{B}{\varepsilon}\right)\right)$ such that:
   \begin{align}
    \|f(x) - P_f^*(x)\|_{[\delta, 2x_0 - \delta]} &\leq \frac{\varepsilon'}{2}, \label{eq1}\\
    \|P_f^*(x)\|_{[-1, 1]} &\leq \frac{\varepsilon'}{2} + B, \label{eq2}\\
    \|P_f^*(x)\|_{[-1, \frac{\delta}{2}] \cup [2x_0 - \frac{\delta}{2}, 1]} &\leq \frac{\varepsilon'}{2}. \label{eq3}
\end{align}

    If $2x_0 - \frac{\delta}{2} \geq 1$, then $[2x_0 - \frac{\delta}{2}, 1] = \emptyset$. 

    We now construct $P_f(x)$ by taking $P_f(x) = P_f^*(x) + P_f^*(-x)$. Since $P_f(x)$ is even, its degree is still $O\left(\frac{1}{\delta} \log\left(\frac{B}{\varepsilon}\right)\right)$. Using the properties from (\ref{eq1}) and (\ref{eq3}), we have:
    \begin{align*}
        &\|f(x) - P_f(x)\|_{[\delta, 2x_0 - \delta]} \\
        &\leq \|f(x) - P_f^*(x)\|_{[\delta, 2x_0 - \delta]} + \|P_f^*(-x)\|_{[\delta, 2x_0 - \delta]} \\
        &\leq \frac{\varepsilon'}{2} + \frac{\varepsilon'}{2} = \varepsilon'.
    \end{align*}
         
    Similarly, from (\ref{eq2}) and (\ref{eq3}), we get:
    \begin{align*}
       &\|P_f(x)\|_{[-1, 1]} \leq \|P_f(x)\|_{[0, 1]} \\
       &\leq \|P_f^*(x)\|_{[0, 1]} + \|P_f^*(-x)\|_{[0, 1]} \\
       &\leq \left(\frac{\varepsilon'}{2} + B\right) + \frac{\varepsilon'}{2} = \varepsilon' + B.  
    \end{align*}
       
    Finally, we take the real part of $P_f(x)$. Therefore, $P_f(x)$ is an even polynomial in $\mathbb{R}[x]$ that satisfies the conditions of Corollary \ref{cor2}.
\end{proof}

\section{Proof of Lemma \ref{lem6}}
\label{positive}

\begin{proof}
    We focus on the case where the polynomial degree in Lemma \ref{lem5} is even, with the odd-degree case following similarly. Let $t := \frac{3\delta}{2}$, $\delta' := \frac{\delta}{2}$, and $\varepsilon' := \frac{\varepsilon}{2}$ in Lemma \ref{lem3}. For all $x \in [-1, 1]$, the polynomial $1 - P^{\textnormal{even}}(x)$ is even and satisfies $|1 - P^{\textnormal{even}}(x)| \leq 1$. Moreover, we have:

    \[
    \begin{cases}
    1 - P^{\textnormal{even}}(x) \in \left[0, \frac{\varepsilon}{2}\right], & \text{for all } x \in [-\delta, \delta], \\
    1 - P^{\textnormal{even}}(x) \in \left[1 - \frac{\varepsilon}{2}, 1\right], & \text{for all } x \in [-1, -2\delta] \cup [2\delta, 1].
    \end{cases}
    \]
    
    Now, choose $\delta' := \delta$ and $\varepsilon' := \varepsilon$ in Lemma \ref{lem5}. Define $\widetilde{P}_f(x) := \left(1 - P^{\textnormal{even}}(x)\right) \cdot P^{\textnormal{even}}_f(x)$. We claim that $\widetilde{P}_f(x)$ satisfies the desired properties. Applying the triangle inequality, we immediately obtain $\|\widetilde{P}_f(x)\| \leq 1$. 

    Moreover, we have:
    \begin{align*}
        &\|\widetilde{P}_f(x) - f(x)\|_{[0, \delta]} \\
        &\leq \|\left(1 - P^{\textnormal{even}}(x)\right)\left(P^{\textnormal{even}}_f(x) - f(x)\right)\|_{[0, \delta]} \\
        &\phantom{ww}+ \|(1 - P^{\textnormal{even}}(x) - 1) f(x)\|_{[0, \delta]} \\
        &\leq \|1 - P^{\textnormal{even}}(x)\|_{[0, \delta]} \|P_f^{\textnormal{even}}(x) - f(x)\|_{[0, \delta]} \\
        &\phantom{ww}+ \|P^{\textnormal{even}}(x)\|_{[0, \delta]} \|f(x)\|_{[0, \delta]} \\
        &\leq \frac{\varepsilon}{2} \cdot 2 + 1 \cdot f(x) \\
        &= f(x) + \varepsilon, \\
        &\|\widetilde{P}_f(x) - f(x)\|_{[\delta, 1]} \\
        &\leq \|\left(1 - P^{\textnormal{even}}(x)\right)\left(P_f(x) - f(x)\right)\|_{[\delta, 1]} \\
        &\leq \|1 - P^{\textnormal{even}}(x)\|_{[\delta, 1]} \|P_f(x) - f(x)\|_{[\delta, 1]} \\
        &\leq \varepsilon.
    \end{align*}

    Since $\widetilde{P}_f(x)$ is an even polynomial, we conclude that for all $x \in [-1, 1]$, $|P_f^{\textnormal{even}}(x)| \leq f(|x|) + \varepsilon$ as a result of the above inequalities.
\end{proof}

\section{Proof of Lemma \ref{lem8}}
\label{proof8}

\begin{proof}
    Let \begin{align*}
  S &= \{A^{\textnormal{Re}}(U_i), A^{\textnormal{Im}}(U_i), A^{\textnormal{Re}}(V_i), A^{\textnormal{Im}}(V_i), \\
    &\quad B^{\textnormal{Re}}(U_i), B^{\textnormal{Im}}(U_i), B^{\textnormal{Re}}(W_i), B^{\textnormal{Im}}(W_i)\},
\end{align*}
and $R = \{U_i, V_i, W_i\}$ with $U_i\ket{0} = \ket{\psi_i}$, $V_i\ket{0} = \ket{\phi_i}$, and $W_i\ket{0} = \ket{\omega_i}$ for $i \in [1, N]$. Define the random variables $T^{\textnormal{RR}}$, $T^{\textnormal{RI}}$, $T^{\textnormal{IR}}$, and $T^{\textnormal{II}}$ as follows:
    
    \begin{align*}
        T^{\textnormal{RR}} &= \frac{1}{N}\sum_{i=1}^N Z_i^{\textnormal{RR}} - \frac{1}{N^2}\sum_{k,l=1}^N X_{k}^{\textnormal{Re}} Y_l^{\textnormal{Re}}, \\
        T^{\textnormal{RI}} &= -\frac{1}{N}\sum_{i=1}^N Z_i^{\textnormal{RI}} + \frac{1}{N^2}\sum_{k,l=1}^N X_{k}^{\textnormal{Re}} Y_l^{\textnormal{Im}}, \\
        T^{\textnormal{IR}} &= -\frac{1}{N}\sum_{i=1}^N Z_i^{\textnormal{IR}} + \frac{1}{N^2}\sum_{k,l=1}^N X_{k}^{\textnormal{Im}} Y_l^{\textnormal{Re}}, \\
        T^{\textnormal{II}} &= -\frac{1}{N}\sum_{i=1}^N Z_i^{\textnormal{II}} + \frac{1}{N^2}\sum_{k,l=1}^N X_{k}^{\textnormal{Im}} Y_l^{\textnormal{Im}}.
    \end{align*}

    The total estimator $T$ is then defined as:
    \[
        T = T^{\textnormal{RR}} + T^{\textnormal{II}} + \mi (T^{\textnormal{RI}} + T^{\textnormal{IR}}),
    \]
    and we have:
    \[
        \E(T) = \E(T^{\textnormal{RR}}) + \E(T^{\textnormal{II}}) + \mi (\E(T^{\textnormal{RI}}) + \E(T^{\textnormal{IR}})).
    \]
    
    Next, we calculate the probabilities of the measurement outcomes. For example, the probability of observing $A_j^{\textnormal{Re}}(U_i) = B_{j'}^{\textnormal{Re}}(U_i)$ is:
    \begin{align*}
       &\Pr\left(A_j^{\textnormal{Re}}\left(U_i\right) = B_{j'}^{\textnormal{Re}}\left(U_i\right)\right) \\
       &= \Pr\left(A_j^{\textnormal{Re}}(U_i) = 0 \right) \Pr\left(B_{j'}^{\textnormal{Re}}(U_i) = 0 \right)\\
       &+ \Pr\left(A_j^{\textnormal{Re}}(U_i) = 1 \right) \Pr\left(B_{j'}^{\textnormal{Re}}(U_i) = 1 \right) \\
       &= \left[\frac{1}{2} + \frac{1}{4}\bra{0}U_i^{\dagger}\left(P_f(\widetilde{A}) + P_f^{\dagger}(\widetilde{A})\right)U_i\ket{0}\right]\\
       &\left[\frac{1}{2} + \frac{1}{4}\bra{0}U_i^{\dagger}\left(Q_g(\widetilde{B}) + Q_g^{\dagger}(\widetilde{B})\right)U_i\ket{0}\right] \\
       &\phantom{ww} + \left[\frac{1}{2} - \frac{1}{4}\bra{0}U_i^{\dagger}\left(P_f(\widetilde{A}) + P_f^{\dagger}(\widetilde{A})\right)U_i\ket{0}\right]\\
       &\left[\frac{1}{2} - \frac{1}{4}\bra{0}U_i^{\dagger}\left(Q_g(\widetilde{B}) + Q_g^{\dagger}(\widetilde{B})\right)U_i\ket{0}\right] \\
       &= \frac{1}{2} + \frac{1}{8} \left( \bra{0}U_i^{\dagger}P_f(\widetilde{A})U_i\ket{0} + \bra{0}U_i^{\dagger}P_f^{\dagger}(\widetilde{A})U_i\ket{0} \right)\\
       &\left(\bra{0}U_i^{\dagger}Q_g(\widetilde{B})U_i\ket{0} + \bra{0}U_i^{\dagger}Q_g^{\dagger}(\widetilde{B})U_i\ket{0}\right).
    \end{align*}

    Similarly, we compute the other probabilities:
    \begin{align*}
        &\Pr\left(A_j^{\textnormal{Re}}(U_i) = B_{j'}^{\textnormal{Im}}(U_i)\right) \\
        &= \frac{1}{2} + \frac{\mi}{8} \left( \bra{0}U_i^{\dagger}P_f(\widetilde{A})U_i\ket{0} + \bra{0}U_i^{\dagger}P_f^{\dagger}(\widetilde{A})U_i\ket{0} \right)\\
        &\left(\bra{0}U_i^{\dagger}Q_g(\widetilde{B})U_i\ket{0} - \bra{0}U_i^{\dagger}Q_g^{\dagger}(\widetilde{B})U_i\ket{0}\right), \\
        &\Pr\left(A_j^{\textnormal{Im}}(U_i) = B_{j'}^{\textnormal{Re}}(U_i)\right) \\
        &= \frac{1}{2} + \frac{\mi}{8} \left( \bra{0}U_i^{\dagger}P_f(\widetilde{A})U_i\ket{0} - \bra{0}U_i^{\dagger}P_f^{\dagger}(\widetilde{A})U_i\ket{0} \right)\\
        &\left(\bra{0}U_i^{\dagger}Q_g(\widetilde{B})U_i\ket{0} + \bra{0}U_i^{\dagger}Q_g^{\dagger}(\widetilde{B})U_i\ket{0}\right), \\
        &\Pr\left(A_j^{\textnormal{Im}}(U_i) = B_{j'}^{\textnormal{Im}}(U_i)\right) \\
        &= \frac{1}{2} - \frac{1}{8} \left( \bra{0}U_i^{\dagger}P_f(\widetilde{A})U_i\ket{0} - \bra{0}U_i^{\dagger}P_f^{\dagger}(\widetilde{A})U_i\ket{0} \right)\\
        &\left(\bra{0}U_i^{\dagger}Q_g(\widetilde{B})U_i\ket{0} - \bra{0}U_i^{\dagger}Q_g^{\dagger}(\widetilde{B})U_i\ket{0}\right).
    \end{align*}

    Using these, we now calculate the expectation of $T^{\textnormal{RR}}$:
    \begin{align*}
        &\E_{R, S}(T^{\textnormal{RR}}) \\
        &= \frac{1}{N}\sum_{i=1}^N \E_{R}(\E_{S}(Z_i^{\textnormal{RR}})) - \frac{1}{N^2}\sum_{k, l=1}^N \E_{R}(\E_{S}(X_k^{\textnormal{Re}} Y_l^{\textnormal{Re}})) \\
        &= \frac{1}{N} \sum_{i=1}^N \E_{R}\Bigl(\frac{2d(d+1)}{m^2} \cdot m^2 \cdot \Pr\left(A_1^{\textnormal{Re}}(U_i) = B_1^{\textnormal{Re}}(U_i)\right)  \\
        &\phantom{ww} -  d(d+1)\Bigr) - \frac{1}{N^2} \sum_{k,l=1}^N \E_{R}\left(\E_{S}(X_k^{\textnormal{Re}})\E_{S}(Y_l^{\textnormal{Re}})\right) \\
        &= \sum_{i=1}^N \E_{R} \frac{d(d+1)}{4N}\Bigl( \bra{0}U_i^{\dagger}P_f(\widetilde{A})U_i\ket{0} + \bra{0}U_i^{\dagger}P_f^{\dagger}(\widetilde{A})U_i\ket{0} \Bigr) \\
&\phantom{ww}\Bigl(\bra{0}U_i^{\dagger}Q_g(\widetilde{B})U_i\ket{0}+ \bra{0}U_i^{\dagger}Q_g^{\dagger}(\widetilde{B})U_i\ket{0}\Bigr) \\
        &\phantom{ww} - \frac{1}{N^2} \sum_{k,l=1}^N \E_{R}\Bigl(\left(\frac{d}{2}\bra{0}V_k^{\dagger}\left(P_f(\widetilde{A}) + P_f^{\dagger}(\widetilde{A})\right)V_k\ket{0}\right) \\
        &\phantom{ww}\left(\frac{d}{2}\bra{0}W_l^{\dagger}\left(Q_g(\widetilde{B}) + Q_g^{\dagger}(\widetilde{B})\right)W_l\ket{0}\right)\Bigr) \\
        &= \frac{d(d+1)}{4N} \sum_{i=1}^N \E_{\ket{\psi_i} \sim \mathbb{C}^d}\bra{\psi_i}\left(P_f(\widetilde{A}) + P_f^{\dagger}(\widetilde{A})\right)\ket{\psi_i} \\
        &\phantom{ww} \bra{\psi_i}\left(Q_g(\widetilde{B}) + Q_g^{\dagger}(\widetilde{B}) \right)\ket{\psi_i} \\
        &\phantom{ww} - \frac{d^2}{4N^2} \sum_{k,l=1}^N \E_{\ket{\phi_k}, \ket{\omega_l} \sim \mathbb{C}^d} \bra{\phi_k}\left(P_f(\widetilde{A}) + P_f^{\dagger}(\widetilde{A})\right)\ket{\phi_k} \\
        &\phantom{ww} \bra{\omega_l}\left(Q_g(\widetilde{B}) + Q_g^{\dagger}(\widetilde{B})\right)\ket{\omega_l} \\
        &=\frac{1}{4} \Bigg(\Tr(P_f(\widetilde{A}))\Tr(Q_g(\widetilde{B})) + \Tr(P_f(\widetilde{A}) Q_g(\widetilde{B})) \\
     &\phantom{ww} + \Tr(P_f(\widetilde{A}))\Tr(Q_g^{\dagger}(\widetilde{B}))  +\Tr(P_f(\widetilde{A}) Q_g^{\dagger}(\widetilde{B}))\Bigg)   \\
     &\phantom{ww} + \frac{1}{4} \Bigg(\Tr(P_f^{\dagger}(\widetilde{A}))\Tr(Q_g(\widetilde{B}))  + \Tr(P_f^{\dagger}(\widetilde{A}) Q_g(\widetilde{B}))  \\  
    & \phantom{ww}  + \Tr(P_f^{\dagger}(\widetilde{A}))\Tr(Q_g^{\dagger}(\widetilde{B})) + \Tr(P_f^{\dagger}(\widetilde{A}) Q_g^{\dagger}(\widetilde{B}))\Bigg) \\
    &\phantom{ww} - \frac{1}{4}\left(\Tr(P_f(\widetilde{A})) + \Tr(P_f^{\dagger}(\widetilde{A})) \right) \left(\Tr(Q_g(\widetilde{B})) + \Tr(Q_g^{\dagger}(\widetilde{B}))  \right)\\
        &= \frac{1}{4} \Bigl(\Tr(P_f(\widetilde{A}) Q_g(\widetilde{B})) + \Tr(P_f^{\dagger}(\widetilde{A}) Q_g(\widetilde{B})) \\  
     & \phantom{ww} + \Tr(P_f^{\dagger}(\widetilde{A}) Q_g(\widetilde{B})) + \Tr(P_f^{\dagger}(\widetilde{A}) Q_g^{\dagger}(\widetilde{B}))\Bigr) \\
        &= \Tr\left(\are\bre\right).
    \end{align*}

    By similar reasoning, we compute $\E(T^{\textnormal{RI}}) = \Tr\left(\are\bim\right)$, $\E(T^{\textnormal{IR}}) = \Tr\left(\aim\bre\right)$, and $\E(T^{\textnormal{II}}) = \Tr\left(-\aim\bim\right)$. Thus, we conclude that:
    \[
        \E(T) = \Tr(P_f(\widetilde{A}) Q_g(\widetilde{B})).
    \]
\end{proof}

\section{Proof of Lemma \ref{lem9}}
\label{proof9}

\begin{proof}
We aim to provide an upper bound for $\Var(T)$. Starting from the expression
\begin{align}
\label{ali01}
    \Var(T) &= \Var\left(T^{\textnormal{RR}} + T^{\textnormal{II}} + \mi \left(T^{\textnormal{RI}} + T^{\textnormal{IR}}\right)\right) \notag \\
    &= \Var\left(T^{\textnormal{RR}} + T^{\textnormal{II}}\right) + \Var\left(T^{\textnormal{RI}} + T^{\textnormal{IR}}\right),
\end{align}
we will first calculate $\Var\left(T^{\textnormal{RR}} + T^{\textnormal{II}}\right)$.
\begin{align}
\label{ali02}
    &\Var\left(T^{\textnormal{RR}} + T^{\textnormal{II}}\right) \notag \\
    &= \E\left[\left(T^{\textnormal{RR}} + T^{\textnormal{II}}\right)^2\right] - \left(\E\left(T^{\textnormal{RR}} + T^{\textnormal{II}}\right)\right)^2 \notag \\
    &= \E\left[\left(T^{\textnormal{RR}}\right)^2 + 2 T^{\textnormal{RR}} T^{\textnormal{II}} + \left(T^{\textnormal{II}}\right)^2 \right] - \left(\E\left(T^{\textnormal{RR}} + T^{\textnormal{II}}\right)\right)^2.
\end{align}
Next, we compute $\E\left[\left(T^{\textnormal{RR}}\right)^2\right]$ as shown in Equation (\ref{ali4}):
\begin{align}
\label{ali4}
    &\E\left[\left(T^{\textnormal{RR}}\right)^2\right] \notag \\
    &= \E\left[\left(\frac{1}{N}\sum_{i=1}^N Z_i^{\textnormal{RR}} - \frac{1}{N^2}\sum_{k,l=1}^N X_k^{\textnormal{Re}} Y_l^{\textnormal{Re}}\right)^2\right] \notag \\
    &= \E\left[ \left(\frac{1}{N}\sum_{i=1}^N Z_i^{\textnormal{RR}} \right)^2 \right] - \frac{2}{N^3} \E\left( \sum_{i, k, l=1}^N Z_i^{\textnormal{RR}} X_k^{\textnormal{Re}} Y_l^{\textnormal{Re}} \right) \notag \\
    &\phantom{ww}+ \E\left[ \left(\frac{1}{N^2}\sum_{k,l=1}^N X_k^{\textnormal{Re}} Y_l^{\textnormal{Re}} \right)^2 \right].
\end{align}

To compute Equation (\ref{ali4}), we first calculate $\frac{2}{N^3} \E\left( \sum_{i, k, l=1}^N Z_i^{\textnormal{RR}} X_k^{\textnormal{Re}} Y_l^{\textnormal{Re}} \right)$. Using the fact that for all $i, j \in [1, N]$, $\E(X_i^{\textnormal{Re}}) = \E(X_j^{\textnormal{Re}})$ and similarly for $Y$ and $Z$, we get:
\begin{align}
\label{ali5}
    \frac{2}{N^3} \E\left( \sum_{i, k, l=1}^N Z_i^{\textnormal{RR}} X_k^{\textnormal{Re}} Y_l^{\textnormal{Re}} \right) &= 2 \E(Z_1^{\textnormal{RR}}) \E(X_1^{\textnormal{Re}}) \E(Y_1^{\textnormal{Re}}).
\end{align}

Then we calculate $\E\left[ \left(\frac{1}{N}\sum_{i=1}^N Z^{\textnormal{RR}}_i \right)^2\right]$ and $\E\left[ \left(\frac{1}{N^2}\sum_{k,l=1}^N X^{\textnormal{Re}}_kY^{\textnormal{Re}}_l \right)^2\right]$ separately. As for all $i, j \in[1, N]$,  $\E\left(\left(Z^{\textnormal{RR}}_i\right)^2\right) = \E\left(\left(Z^{\textnormal{RR}}_j\right)^2\right)$, we obtain that $\E\left[ \left(\frac{1}{N}\sum_{i=1}^N Z_i^{\textnormal{RR}} \right)^2\right] = \frac{1}{N}\E\left(\left(Z^{\textnormal{RR}}_1\right)^2\right) + \frac{N-1}{N}\E(Z^{\textnormal{RR}}_1)\E(Z^{\textnormal{RR}}_2)$. For all $i \in [1, N]$, we define $\bra{0}U_i^{\dagger} \left(P_f(\widetilde{A}) + P_f^{\dagger}(\widetilde{A}) \right)U_i\ket{0}:= a_i$, $\bra{0}U_i^{\dagger} \left(Q_g(\widetilde{B}) + Q_g^{\dagger}(\widetilde{B})\right)U_i\ket{0} :=b_i$, then we have:
\begin{align}
\label{ali7}
    &\E_{U_i \sim \mathbb{U}(d), S}\left[ \left(\frac{1}{N}\sum_{i=1}^N Z^{\textnormal{RR}}_i \right)^2\right] \notag \\
    &=  \frac{1}{N}\E_{U_1 \sim \mathbb{U}(d), S}\left(\left(Z^{\textnormal{RR}}_1\right)^2\right) \notag \\ 
    &\phantom{ww}+ \frac{N(N-1)}{N^2}\E_{U_1,U_2 \sim \mathbb{U}(d), S}(Z^{\textnormal{RR}}_1Z^{\textnormal{RR}}_2) \notag \\
    &= \frac{d^2(d+1)^2}{N}\E_{U_1 \sim \mathbb{U}(d)}\Bigg(\E_{S}\Big(\frac{4}{m^4} \notag \\ 
    &\phantom{ww} \sum_{j,k,u,v=1}^{m} 1[A^{\textnormal{Re}}_j(U_1)=B^{\textnormal{Re}}_k(U_1)] 1[A^{\textnormal{Re}}_u(U_1)=B^{\textnormal{Re}}_v(U_1)] \notag \\ 
    &\phantom{ww}- \frac{2}{m^2}\sum_{j ,k=1}^m1[A^{\textnormal{Re}}_j(U_1)=B^{\textnormal{Re}}_k(U_1)] \notag \\ 
    &\phantom{ww} - \frac{2}{m^2}\sum_{u, v =1}^m1[A^{\textnormal{Re}}_u(U_1)=B^{\textnormal{Re}}_v(U_1)] + 1\Big)\Bigg) \notag \\ 
    &\phantom{ww}+ \left(1-\frac{1}{N}\right){\E}^2\left(Z^{\textnormal{RR}}_1 \right) \notag\\
    &= \frac{d^2(d+1)^2}{N}\E_{U_1 \sim \mathbb{U}(d)}\Bigg(\frac{4}{m^4}\Bigg(\frac{m^2}{2}\left(1 + \frac{a_1b_1}{4}\right) \notag \\ 
    & \phantom{ww} + m^2(m-1)^2 \frac{1}{4}\left(1 + \frac{a_1b_1}{4}\right)^2 \notag \\ 
    &\phantom{ww}+  m^2(m-1)\frac{1}{4}\left(1+\frac{b_1^2}{4}+\frac{a_1b_1}{2}\right)  \notag \\ 
    & \phantom{ww} + m^2(m-1)\frac{1}{4}\left(1+\frac{a_1^2}{4}+\frac{a_1b_1}{2}\right)\Bigg)  - \left(1+\frac{a_1b_1}{4}\right) \notag \\ 
    & \phantom{ww}-\frac{2}{m^2} \cdot m^2 \cdot \frac{1}{2}\left(1+\frac{a_1b_1}{4}\right) +1\Bigg) + \left(1-\frac{1}{N}\right){\E}^2\left(Z^{\textnormal{RR}}_1 \right) \notag\\
    &=\frac{d^2(d+1)^2}{N}\E_{U_1 \sim \mathbb{U}(d)}\Bigg(\frac{2}{m^2}\left(1+\frac{a_1b_1}{4}\right) + \frac{(m-1)^2}{m^2}\notag \\ 
    &\phantom{ww}\left(1+\frac{a_1b_1}{4}\right)^2  + \frac{m-1}{m^2}\left(2+\frac{a_1^2}{4}+\frac{b_1^2}{4}+a_1b_1\right) \notag \\ 
    &\phantom{ww}-2 -\frac{a_1b_1}{2} +1\Bigg) + \left(1-\frac{1}{N}\right){\E}^2\left(Z^{\textnormal{RR}}_1 \right) \notag\\
    &=\frac{d^2(d+1)^2}{N}\E_{U_1 \sim \mathbb{U}(d)}\biggl(\frac{a_1^2b_1^2}{16} + \frac{2a_1^2+2b_1^2-a_1^2b_1^2}{8m} \notag\\
    & \phantom{ww} + \frac{a_1^2b_1^2-4a_1^2-4b_1^2+16}{16m^2}\biggr) + \left(1-\frac{1}{N}\right){\E}^2\left(Z^{\textnormal{RR}}_1 \right) \notag \\
    &=\frac{1}{N}O\left(1 + \frac{d^2}{m} + \frac{d^4}{m^2}\right) + {\E}^2\left(Z^{\textnormal{RR}}_1 \right).
\end{align}

The last line comes from $\E(a_1^2b_1^2) = O(\frac{1}{d^4})$. Moreover, the terms in $\E\limits_{S}(\frac{4}{m^4}\sum_{j,k,u,v=1}^{m} 1[A^{\textnormal{Re}}_j(U_1)=B_k^{\textnormal{Re}}(U_1)]1[A^{\textnormal{Re}}_u(U_1)=B^{\textnormal{Re}}_v(U_1)])$ in the second line arise from four distinct cases in the summation: $j=u \wedge k=v, j \neq u \wedge k \neq v, j = u \wedge k \neq v, j \neq u \wedge k = v$. For the last two cases, consider the third case as an example $j = u \wedge k \neq v$. There are only two possible scenarios: 1. $A^{\textnormal{Re}}_j(U_1)=A^{\textnormal{Re}}_u(U_1)=0$ which requires $B^{\textnormal{Re}}_k(U_1)=B^{\textnormal{Re}}_v(U_1)=0$. 2. $A^{\textnormal{Re}}_j(U_1)=A^{\textnormal{Re}}_u(U_1)=1$ which requires $B^{\textnormal{Re}}_k(U_1)=B^{\textnormal{Re}}_v(U_1)=1$. So the probability of this case is the sum of $\Pr(A^{\textnormal{Re}}_j(U_1)=B^{\textnormal{Re}}_k(U_1)=B^{\textnormal{Re}}_v(U_1)=0)$ and $\Pr(A^{\textnormal{Re}}_j(U_1)=B^{\textnormal{Re}}_k(U_1)=B^{\textnormal{Re}}_v(U_1)=1)$, which equals to $\frac{1}{2}\left(1+\frac{a_1}{2}\right)\frac{1}{4}(1+\frac{b_1}{2})^2 + \frac{1}{2}(1-\frac{a_1}{2})\frac{1}{4}(1-\frac{b_1}{2})^2 = \frac{1}{4}(1+\frac{b_1^2}{4}+\frac{a_1b_1}{2})$. Similarly, we can know the expectation of the last case.\par 

Now we calculate $\E\left[ \left(\frac{1}{N^2}\sum_{k,l=1}^N X^{\textnormal{Re}}_kY^{\textnormal{Re}}_l \right)^2\right]$. 

Similar to the previous treatment, for all $i \in [1, N]$, we define $\bra{0}V_i^{\dagger}\left(P_f(\widetilde{A}) + P_f^{\dagger}(\widetilde{A}) \right) V_i\ket{0}:= c_i$,$ \bra{0}W_i^{\dagger}\left(Q_g(\widetilde{B}) + Q_g^{\dagger}(\widetilde{B}) \right) W_i\ket{0}:= e_i$. Then we have:
\begin{align}
\label{ali8}
    &\E\left[ \left(\frac{1}{N^2}\sum_{k,l=1}^N X^{\textnormal{Re}}_kY^{\textnormal{Re}}_l \right)^2\right] \notag \\
    &= \frac{1}{N^4} \cdot N^2 \Bigg[\E\left(\left(X^{\textnormal{Re}}_1\right)^2\left(Y^{\textnormal{Re}}_1\right)^2\right) + (N-1)\E\left(\left(X^{\textnormal{Re}}_1\right)^2\right)\notag \\
    &\phantom{ww} \E(Y^{\textnormal{Re}}_1)\E(Y^{\textnormal{Re}}_2)  + (N-1)\E\left(\left(Y^{\textnormal{Re}}_1\right)^2\right)\E(X^{\textnormal{Re}}_1)\E\left(X^{\textnormal{Re}}_2\right) \notag \\
    &\phantom{ww}+ (N-1)^2\E(X^{\textnormal{Re}}_1)\E(X^{\textnormal{Re}}_2)\E(Y^{\textnormal{Re}}_1)\E(Y^{\textnormal{Re}}_2)\Bigg]\notag \\ 
     &=\frac{1}{N^2} \E\left(\left(X^{\textnormal{Re}}_1\right)^2\left(Y^{\textnormal{Re}}_1\right)^2\right) + \frac{N-1}{N^2}\E\left(\left(X^{\textnormal{Re}}_1\right)^2\right){\E}^2(Y^{\textnormal{Re}}_1) \notag \\
    &\phantom{ww}+ \frac{N-1}{N^2}\E\left(\left(Y^{\textnormal{Re}}_1\right)^2\right){\E}^2(X^{\textnormal{Re}}_1) \notag \\
    &\phantom{ww} +\frac{(N-1)^2}{N^2}{\E}^2(X^{\textnormal{Re}}_1){\E}^2(Y^{\textnormal{Re}}_1). 
  \end{align}  

Next, we calculate $\E\left(\left(X^{\textnormal{Re}}_1\right)^2\right)$:
  \begin{align}
  \label{ali9}
   &\E\left(\left(X^{\textnormal{Re}}_1\right)^2\right) \notag \\ &=d^2\E_{V_1 \sim \mathbb{U}(d)}\Biggl(\E_{S}\left(\frac{2}{m}\sum_{i=1}^m 1\left[A^{\textnormal{Re}}_i(V_1) = 0\right] - 1\right)\notag \\
    &\phantom{ww}\Bigg(\frac{2}{m}\sum_{j=1}^m 1[A^{\textnormal{Re}}_j(V_1) = 0] - 1\Bigg)\Biggr) \notag \\
    &=d^2\E_{V_1 \sim \mathbb{U}(d)}\Bigg(\E_{S}\Bigg(\frac{4}{m^2}\sum_{i,j=1}^{m} 1[A^{\textnormal{Re}}_i(V_1) = 0]1[A^{\textnormal{Re}}_j(V_1) = 0] \notag \\
    &\phantom{ww}- 2\cdot\frac{2}{m}\sum_{i=1}^m 1[A^{\textnormal{Re}}_i(V_1) = 0] + 1\Bigg)\Bigg) \notag \\
    &=d^2\E_{V_1 \sim \mathbb{U}(d)}\biggl(\frac{4}{m^2}\left[m \cdot \frac{1}{2}\left(1+\frac{1}{2}c_1\right) + \frac{m(m-1)}{4}\left(1+\frac{c_1}{2}\right)^2\right] \notag \\
    &\phantom{ww}- \frac{4}{m} \cdot m \cdot \frac{1}{2}\left(1+\frac{c_1}{2}\right) + 1 \biggr) \notag \\
    &=d^2\E_{V_1 \sim \mathbb{U}(d)}\left(\frac{m-1}{4m}c_1^2 + \frac{1}{m}\right) \notag \\
    &=O\left(1 + \frac{d^2}{m}\right).
\end{align}

Similarly, we know $\E\left(\left(Y^{\textnormal{Re}}_1\right)^2\right) = d^2\E\limits_{W_1 \sim \mathbb{U}(d)}\left(\frac{m-1}{4m}e_1^2 + \frac{1}{m}\right) = O\left(1 + \frac{d^2}{m}\right)$. Thus, we have:
\begin{align}
\label{ali10}
&\E\left(\left(X^{\textnormal{Re}}_1\right)^2\right)\E\left(\left(Y^{\textnormal{Re}}_1\right)^2\right) \notag \\
&= \E_{V_1, W_1 \sim \mathbb{U}(d)}\left(\frac{(m-1)^2}{16m^2}c_1^2e_1^2d^4 + \frac{m-1}{4m^2}(c_1^2+e_1^2)d^4 + \frac{d^4}{m^2}\right) \notag\\
    &=O\left(1 + \frac{d^2}{m}  +\frac{d^4}{m^2}\right).
\end{align}

Using Equations (\ref{ali9}) and (\ref{ali10}), we can derive Equation (\ref{ali8}): 
\begin{align}
\label{ali11}
     &\E\left[ \left(\frac{1}{N^2}\sum_{k,l=1}^N X^{\textnormal{Re}}_kY^{\textnormal{Re}}_l \right)^2\right]\notag \\
     &= O\left(\frac{1}{N} + \frac{d^2}{mN} + \frac{d^4}{m^2N^2}\right) + {\E}^2(X^{\textnormal{Re}}_1){\E}^2(Y^{\textnormal{Re}}_1).
\end{align}

Substitute the results of Equations (\ref{ali5}), (\ref{ali7}) and (\ref{ali11}) into Equation (\ref{ali4}), we have 
\begin{align}
\label{ali13}
    \E\left(\left(T^{\textnormal{RR}}\right)^2\right) 
    &=  \frac{1}{N}O\left(1 + \frac{d^2}{m} + \frac{d^4}{m^2}\right) + {\E}^2\left(Z^{\textnormal{RR}}_1 \right) \notag \\
&\phantom{ww}-2\E(Z^{\textnormal{RR}}_1)\E(X^{\textnormal{Re}}_1)\E(Y^{\textnormal{Re}}_1) \notag \\
&\phantom{ww}+ {\E}^2(X^{\textnormal{Re}}_1){\E}^2(Y^{\textnormal{Re}}_1). 
\end{align}

Following a calculation procedure similar to that presented above, we can derive the result of $\E\left(\left(T^{\textnormal{II}}\right)^2\right)$.
\begin{align}
\label{ali14}
      \E\left(\left(T^{\textnormal{II}}\right)^2\right) 
      &=  \frac{1}{N}O\left(1 + \frac{d^2}{m} + \frac{d^4}{m^2}\right) + {\E}^2\left(Z^{\textnormal{II}}_1 \right) \notag \\
    &\phantom{ww}- 2\E(Z^{\textnormal{II}}_1)\E(X^{\textnormal{Im}}_1)\E(Y^{\textnormal{Im}}_1) + {\E}^2(X^{\textnormal{Im}}_1){\E}^2(Y^{\textnormal{Im}}_1). 
\end{align}

Moreover, we calculate $\E\left(T^{\textnormal{RR}}T^{\textnormal{II}}\right)$ as follows.
\begin{align}
 \label{ali15}
    &\E\left(T^{\textnormal{RR}}T^{\textnormal{II}}\right) \notag \\
    &= \E\Bigg[\Bigg(\frac{1}{N}\sum_{i=1}^N Z_i^{\textnormal{RR}} - \frac{1}{N^2}\sum_{k,l=1}^N X_{k}^{\textnormal{Re}}Y_l^{\textnormal{Re}}\Bigg) \notag \\
    &\phantom{ww}\Bigg(-\frac{1}{N}\sum_{i=1}^N Z_i^{\textnormal{II}} + \frac{1}{N^2}\sum_{k,l=1}^N X_{k}^{\textnormal{Im}}Y_l^{\textnormal{Im}}\Bigg)\Bigg] \notag \\
    &=-\frac{1}{N^2} \cdot N^2 \cdot \E\left(Z_1^{\textnormal{RR}}\right)\E\left(Z_1^{\textnormal{II}} \right) + \frac{1}{N^3} \cdot N^3 \cdot \E\left(Z_1^{\textnormal{RR}}\right) \notag \\
    &\phantom{ww}\E\left(X_1^{\textnormal{Im}} \right)\E\left(Y_1^{\textnormal{Im}} \right) + \frac{1}{N^3} \cdot N^3 \cdot \E\left(Z_1^{\textnormal{II}}\right)\E\left(X_1^{\textnormal{Re}} \right)\E\left(Y_1^{\textnormal{Re}} \right) \notag \\
    &\phantom{ww} - \frac{1}{N^4} \cdot N^4 \cdot \E\left(X_1^{\textnormal{Re}}\right)\E\left(Y_1^{\textnormal{Re}} \right)\E\left(X_1^{\textnormal{Im}}\right)\E\left(Y_1^{\textnormal{Im}} \right) \notag \\
    &= -\E\left(Z_1^{\textnormal{RR}}\right)\E\left(Z_1^{\textnormal{II}} \right) + \E\left(Z_1^{\textnormal{RR}}\right)\E\left(X_1^{\textnormal{Im}} \right)\E\left(Y_1^{\textnormal{Im}} \right)\notag \\
    &\phantom{ww} + \E\left(Z_1^{\textnormal{II}}\right)\E\left(X_1^{\textnormal{Re}} \right)\E\left(Y_1^{\textnormal{Re}} \right) \notag \\
    &\phantom{ww} -\E\left(X_1^{\textnormal{Re}}\right)\E\left(Y_1^{\textnormal{Re}} \right)\E\left(X_1^{\textnormal{Im}}\right)\E\left(Y_1^{\textnormal{Im}} \right).
\end{align}

Using Equations \eqref{ali13}, \eqref{ali14}, and \eqref{ali15}, $\E\left[\left(T^{\textnormal{RR}} +T^{\textnormal{II}}\right)^2\right]$ is equal to
\begin{align}
\label{ali16}
   & \frac{1}{N}O(1 + \frac{d^2}{m} + \frac{d^4}{m^2}) + {\E}^2\left(Z^{\textnormal{RR}}_1 \right) - 2\E(Z^{\textnormal{RR}}_1)\E(X^{\textnormal{Re}}_1)\E(Y^{\textnormal{Re}}_1) \notag \\
   & \phantom{ww}+ {\E}^2(X^{\textnormal{Re}}_1){\E}^2(Y^{\textnormal{Re}}_1)  + {\E}^2\left(Z^{\textnormal{II}}_1 \right) - 2\E(Z^{\textnormal{II}}_1)\E(X^{\textnormal{Im}}_1)\E(Y^{\textnormal{Im}}_1) \notag \\
   & \phantom{ww}+ {\E}^2(X^{\textnormal{Im}}_1){\E}^2(Y^{\textnormal{Im}}_1) -2\E\left(Z_1^{\textnormal{RR}}\right)\E\left(Z_1^{\textnormal{II}} \right) \notag \\
   & \phantom{ww}+ 2\E\left(Z_1^{\textnormal{RR}}\right)\E\left(X_1^{\textnormal{Im}} \right)\E\left(Y_1^{\textnormal{Im}} \right) \notag \\
   & \phantom{ww}+ 2\E\left(Z_1^{\textnormal{II}}\right)\E\left(X_1^{\textnormal{Re}} \right)\E\left(Y_1^{\textnormal{Re}} \right) \notag \\
   &\phantom{ww}-2\E\left(X_1^{\textnormal{Re}}\right)\E\left(Y_1^{\textnormal{Re}} \right)\E\left(X_1^{\textnormal{Im}}\right)\E\left(Y_1^{\textnormal{Im}} \right) \notag \\
   &= \frac{1}{N}O\left(1 + \frac{d^2}{m} + \frac{d^4}{m^2}\right) + {\E}^2\left(Z^{\textnormal{RR}}_1 - Z^{\textnormal{II}}_1\right) \notag \\
   & \phantom{ww}-2\E\left(Z^{\textnormal{RR}}_1 - Z^{\textnormal{II}}_1\right)\left[\E\left(X_1^{\textnormal{Re}} \right)\E\left(Y_1^{\textnormal{Re}} \right) - \E\left(X_1^{\textnormal{Im}}\right)\E\left(Y_1^{\textnormal{Im}} \right)\right] \notag \\
   & \phantom{ww}+ \left[\E\left(X_1^{\textnormal{Re}} \right)\E\left(Y_1^{\textnormal{Re}} \right) - \E\left(X_1^{\textnormal{Im}}\right)\E\left(Y_1^{\textnormal{Im}} \right)\right]^2 \notag \\
   &= \left[\E\left(Z^{\textnormal{RR}}_1 - Z^{\textnormal{II}}_1\right) - \left(\E\left(X_1^{\textnormal{Re}} \right)\E\left(Y_1^{\textnormal{Re}} \right) - \E\left(X_1^{\textnormal{Im}}\right)\E\left(Y_1^{\textnormal{Im}} \right)\right)\right]^2 \notag \\
   & \phantom{ww}+ \frac{1}{N}O\left(1 + \frac{d^2}{m} + \frac{d^4}{m^2}\right) \notag \\
   &=\left[\E\left(Z^{\textnormal{RR}}_1 - X_1^{\textnormal{Re}}Y_1^{\textnormal{Re}} \right) - \E\left(Z^{\textnormal{II}}_1 - X_1^{\textnormal{Im}} Y_1^{\textnormal{Im}}  \right)\right]^2 \notag \\
   & \phantom{ww}+ \frac{1}{N}O\left(1 + \frac{d^2}{m} + \frac{d^4}{m^2}\right) \notag \\
   & =  {\E}^2\left(T^{\textnormal{RR}} + T^{\textnormal{II}}\right) + \frac{1}{N}O\left(1 + \frac{d^2}{m} + \frac{d^4}{m^2}\right).
\end{align}

By substituting the result of Equation (\ref{ali16}) into Equation (\ref{ali02}), we can compute $\Var\left(T^{\textnormal{RR}} + T^{\textnormal{II}}\right)$.
\begin{align}
\label{ali17}
    \Var\left(T^{\textnormal{RR}} + T^{\textnormal{II}}\right) = \frac{1}{N}O\left(1 + \frac{d^2}{m} + \frac{d^4}{m^2}\right).
\end{align}

Using a computational process similar to that for calculating $\Var\left(T^{\textnormal{RR}} + T^{\textnormal{II}}\right)$, we derive $\Var\left(T^{\textnormal{RI}} + T^{\textnormal{IR}}\right) = \frac{1}{N}O\left(1 + \frac{d^2}{m} + \frac{d^4}{m^2}\right)$.

Finally, we can safely claim that $$\Var(T) = \frac{1}{N}O\left(1 + \frac{d^2}{m} + \frac{d^4}{m^2}\right).$$
\end{proof}

\section{Proof of Corollary \ref{cor4}}
\label{proofcor4}

\begin{proof}
    For the target functions $f, g$, let $P_f$ and $Q_g$ be the approximating polynomials given by Corollary \ref{cor2}, satisfying $\|f(x) - P_f(x)\|_{[x_0-r, x_0+r]} \leq \frac{\varepsilon}{2d}$, $ \|g(x) - Q_g(x)\|_{[x_0-r, x_0+r]} \leq \frac{\varepsilon}{2Kd}$. By Theorem \ref{thm4}, we obtain a good estimate for $\Tr\left(P_f\left(\frac{A}{d}\right)Q_g\left(\frac{B}{d}\right)\right)$. The remaining task is to show that this estimate approximates$\Tr(f(A)g(B))$, with an error bound up to a constant factor of $\left(\frac{1}{d}\right)^{k_A + k_B}$.
    \begin{align*}
        &\left|\Tr\left(P_f\left(\frac{A}{d}\right)Q_g\left(\frac{B}{d}\right)\right) - \left(\frac{1}{d}\right)^{k_A + k_B}\Tr\left(f(A)g(B)\right)\right| \\
        & =\left|\Tr\left(P_f\left(\frac{A}{d}\right)Q_g\left(\frac{B}{d}\right)\right) - \Tr\left(\frac{f(A)}{d^{k_A}} \cdot \frac{g(B)}{d^{k_B}}\right)\right| \\
        & = \left|\Tr\left(P_f\left(\frac{A}{d}\right)Q_g\left(\frac{B}{d}\right)\right) - \Tr\left(f\left(\frac{A}{d}\right)g\left(\frac{B}{d}\right)\right)\right| \\
        &\leq \left|\Tr\left(\left(P_f\left(\frac{A}{d}\right) - f\left(\frac{A}{d}\right)\right)Q_g\left(\frac{B}{d}\right)\right)\right| \notag \\ 
    &\phantom{ww}
        + \left| \Tr\left(f\left(\frac{A}{d}\right)\left(Q_g\left(\frac{B}{d}\right) - g\left(\frac{B}{d}\right)\right)\right)\right|\\
        &\leq \sum_{i=1}^{d} \sigma_i\left(P_f\left(\frac{A}{d}\right) - f\left(\frac{A}{d}\right)\right)\sigma_i\left(Q_g\left(\frac{B}{d}\right)\right) \notag \\ 
     &\phantom{ww} + \sum_{i=1}^{d} \sigma_i\left(P_f\left(\frac{A}{d}\right)\right)\sigma_i\left(Q_g\left(\frac{B}{d}\right) - g\left(\frac{B}{d}\right) \right) \\
        &\leq d \cdot \frac{\varepsilon}{2d} \cdot 1 + d \cdot K \cdot \frac{\varepsilon}{2Kd} \\ 
        &= \varepsilon.
    \end{align*}
    In the second inequality, we use the von Neumann's trace inequality.

    The proof for approximating block-encoding oraclse follows similarly to the above one. For the target functions $f$ and $g$, let the approximating polynomials $P_f$ and $Q_g$ be as defined in Corollary \ref{cor2}, satisfying $\|f(x) - P_f(x)\|_{[x_0 - r, x_0 + r]} \leq \frac{\varepsilon}{2(L+1)d}$ and $\|g(x) - Q_g(x)\|_{[x_0 - r, x_0 + r]} \leq \frac{\varepsilon}{2K(L+1)d}$. By Theorem \ref{thm4}, we obtain a good estimate of $\Tr\left(P_f\left(\widetilde{A}\right)Q_g\left(\widetilde{B}\right)\right)$. The remaining task is to show that this provides a good estimate of $\Tr\left(f(A)g(B)\right)$.
    \begin{align*}
        & \left|\Tr(P_f(\widetilde{A})Q_g(\widetilde{B})) - \Tr\left(f(A)g(B)\right)\right| \\
        & \leq \left|\Tr(\left(P_f(\widetilde{A}) - P_f(A)\right)Q_g(\widetilde{B}))\right|\\
        &\phantom{ww}+ \left|\Tr(\left(P_f(A) - f(A)\right)Q_g(\widetilde{B}))\right| \\
        &\phantom{ww}+ \left| \Tr(f(A)\left(Q_g(\widetilde{B}) - Q_g(B)\right))\right| + \left| \Tr\left(f(A)\left(Q_g(B) - f(B)\right)\right)\right|\\
        &\leq \sum_{i=1}^{d} \sigma_i\left(P_f(\widetilde{A}) - P_f(A)\right)\sigma_i\left(Q_g(\widetilde{B})\right) \notag \\
        &\phantom{ww}+  \sum_{i=1}^{d} \sigma_i\left(P_f(A) - f(A)\right)\sigma_i\left(Q_g(\widetilde{B})\right) \notag \\
        &\phantom{ww} + \sum_{i=1}^{d} \sigma_i\left(f(A)\right)\sigma_i\left(Q_g(\widetilde{B}) - Q_g(B) \right) \notag \\
        &\phantom{ww}+ \sum_{i=1}^{d} \sigma_i\left(f(A)\right)\sigma_i\left(Q_g(B) - f(B) \right)\\
        &\leq d \cdot L \cdot \frac{\varepsilon}{2(L+1)d} \cdot 1 + d \cdot \frac{\varepsilon}{2(L+1)d} \cdot 1 \\
        &\phantom{ww} + d \cdot K \cdot L \cdot \frac{\varepsilon}{2K(L+1)d}  + d \cdot K \cdot \frac{\varepsilon}{2K(L+1)d} \\ 
        &= \varepsilon.
    \end{align*}
  In the second inequality, we apply the von Neumann's trace inequality, and in the third inequality, we use the Lipschitz continuity of the polynomial along with the conditions $\|A - \widetilde{A}\| \leq \frac{\varepsilon}{2(L+1)d}$, $\|B - \widetilde{B}\| \leq \frac{\varepsilon}{2K(L+1)d}$.
\end{proof}

\section{Proof of Theorems \ref{thm10} and \ref{thm11}}
\label{proof101}
\begin{proof}
    For the distributed estimation of quantum relative entropy and quantum $\alpha$-R\'{e}nyi relative entropy when $\alpha > 1$, the conclusion directly follows from Lemma \ref{lem6}, Lemma \ref{lem2}, and Corollary \ref{cor1}. Specifically, we demonstrate that $\Tr(P_f(\rho) Q_g(\sigma))$ approximates $\Tr(f(\rho)g(\sigma))$ within a constant factor $C \geq 1$.
      \begin{align*}
        &\left|\Tr(P_f(\rho) Q_g(\sigma)) - \frac{1}{C}\Tr\left(f(\rho)g(\sigma)\right)\right| \\
        &= \bigg|\Tr(P_f(\rho) Q_g(\sigma)- \frac{1}{C}f(\rho)Q_g(\sigma) \\
        &\phantom{ww}+  \frac{1}{C}f(\rho)Q_g(\sigma) - \frac{1}{C}f(\rho)g(\sigma))\bigg| \\
        &\leq \left|\Tr(\frac{\varepsilon}{2r} Q_g(\sigma))\right| + \left|\Tr(\frac{\varepsilon}{2rC} f(\rho))\right| \\
        &\leq \frac{\varepsilon}{2r} \left(\left|\Tr(Q_g(\sigma))\right| + \left|\frac{1}{C}\Tr(f(\rho))\right| \right) \\
        &\leq \frac{\varepsilon}{2r} \left(r + r\right)\\
        &\leq \varepsilon.
  \end{align*}

    Here $r$ is the maximum rank of $\rho$ and $\sigma$. In the first inequality, we use Corollary \ref{cor2} and set $\varepsilon':= \frac{\varepsilon}{2r}$. In the third inequality, we use that $q(x) \leq 1$ for $x \in [-1, 1]$.
    
    For $\alpha \in (0,1)$, we only need to estimate $\Tr(\rho^\alpha \sigma^{1-\alpha})$, setting $f(x) = x^\alpha, g(x) = x^{1-\alpha}, \alpha \in (0,1)$. We now show that $\Tr(P_f(\rho) Q_g(\sigma))$ approximates $\Tr(\rho^\alpha \sigma^{1-\alpha})$.    

    Define $\varepsilon:= \frac{\varepsilon}{8r}$, $\delta: = (\varepsilon')^{\frac{1}{\Bar{\alpha}}}$ in Lemma \ref{lem6}. Let $\rho := \sum_i p_i \ket{\psi_i}\bra{\psi_i}$, $\sigma := \sum_i q_i \ket{\eta_i}\bra{\eta_i}$ be the eigenvalue decompositions for $\rho$ and $\sigma$, respectively. Then we have:
    \begin{align*}
        &\left|\Tr(P_f(\rho) Q_g(\sigma)) - \Tr(\rho^\alpha \sigma^{1-\alpha})\right| \\
        &\leq \left|\Tr \left(P_f(\rho)\left(Q_g(\sigma)-\sigma^{1-\alpha}\right) \right)\right| +  \left|\Tr \left(\left(P_f(\rho) - \rho^\alpha \right) \sigma^{1-\alpha} \right)\right| \\
        & = \left|\Tr \left(\sum_{i} p(p_i) \ketbra{\psi_i}{\psi_i} \sum_{j} \left(q(q_j) - q_j^{1-\alpha}\right) \ketbra{\eta_j}{\eta_j}\right)\right| \\
        &\phantom{ww} +  \left|\Tr \left(\sum_{i} \left(p(p_i) - p_i^{\alpha}\right)\ketbra{\psi_i}{\psi_i}\sum_{j} q_j^{1-\alpha} \ketbra{\eta_j}{\eta_j}\right)\right| \\
        &= \left|\sum_{i, j}p(p_i)\left(q(q_j) - q_j^{1-\alpha}\right) |\braket{\psi_i | \eta_j}|^2\right| \\
        &\phantom{ww}+ \left|\sum_{i, j}\left(p(p_i) - p_i^{\alpha}\right)q_j^{1-\alpha} |\braket{\psi_i | \eta_j}|^2\right|\\
        &\leq \left| \sum_{i, j}p(p_i)\left(q(q_j) - q_j^{1-\alpha}\right)\right| + \left|\sum_{i, j}\left(p(p_i) - p_i^{\alpha}\right)q_j^{1-\alpha}\right| \\
        & \leq \left| \sum_{j}\left(q(q_j) - q_j^{1-\alpha}\right)\right| + \left|\sum_{i}\left(p(p_i) - p_i^{\alpha}\right)\right| \\
        &\leq \left| \sum_{q_j \leq \delta}\left(q(q_j) - q_j^{1-\alpha}\right)\right| +  \left| \sum_{q_j > \delta}\left(q(q_j) - q_j^{1-\alpha}\right)\right| \\
        &\phantom{ww}+ \left|\sum_{p_i \leq \delta}\left(p(p_i) - p_i^{\alpha}\right)\right|+ \left|\sum_{p_i > \delta}\left(p(p_i) - p_i^{\alpha}\right)\right| \\
        &\leq r \cdot \left( \frac{\delta^{1-\alpha}}{2} + \frac{\varepsilon}{8r} + \delta^{1-\alpha}\right) + \frac{\varepsilon}{8} + r \cdot \left(\frac{\delta^\alpha}{2} + \frac{\varepsilon}{8r} + \delta^\alpha\right) +  \frac{\varepsilon}{8} \\
        &\leq 3r\delta^{\Bar{\alpha}} + \frac{\varepsilon}{2} \\
        & \leq 3r \cdot \frac{\varepsilon}{8r} + \frac{\varepsilon}{2} \\\
        &\leq  \varepsilon.    
    \end{align*}
The second inequality follows from $|\braket{\psi_i|\eta_j}| \leq 1$, and the third inequality holds since $p(x) \leq 1$ and $q_j \leq 1$. The fifth inequality follows from Lemma \ref{lem6}. We can conclude that the total query complexity to estimate $\frac{1}{\alpha-1} \log{\Tr(\rho^\alpha \sigma^{1-\alpha})}$ is $\widetilde{O}(\frac{d^2r^{1/\Bar{\alpha}}}{\varepsilon^{2 + 1/\Bar{\alpha}}})$.
\end{proof}

\section{Proof of Theorem \ref{thm12}}
\label{proof12}

\begin{proof}
    In Lemma \ref{lem10}, define
$$
\widetilde{M} := \bra{0}^{\otimes a} \left(I_1 \otimes O^{\dagger}\right)
\left(\mathrm{CNOT} \otimes I_{a+s-1}\right)
\left(I_1 \otimes O\right) \ket{0}^{\otimes a},
$$
where $O$ implements the observable $M$ with $\varepsilon$-precision.  
From Lemma \ref{lem11}, let $\Bar{H}_i$ and $\Hat{H}_i$ be the $(1,a+2,\varepsilon)$-block-encodings of $\me^{-\mi H_it}$ and $\me^{\mi H_it}$, respectively. Define
$$
\widetilde{U}_i := \bra{0}^{\otimes a}\Bar{H}_i\ket{0}^{\otimes a}, 
\quad
\widetilde{V}_i := \bra{0}^{\otimes a}\Hat{H}_i\ket{0}^{\otimes a}.
$$

The party holding $\Bar{H}_1$ and $\Hat{H}_1$ applies $\Bar{H}_1$, $O$, and $\Hat{H}_1$ sequentially.  
The party holding $\Bar{H}_2$ and $\Hat{H}_2$ applies $\Hat{H}_2$, $\BE_{\rho_\textnormal{init}}$, and $\Bar{H}_2$ sequentially.  
Using Algorithm \ref{alg1}, we obtain an estimate $S := \Tr(\widetilde{M}\widetilde{U}_1\widetilde{U}_2\rho\widetilde{V}_2\widetilde{V}_1)$. As in the previous proof, we show that $S$ provides a good estimate of
$\Tr(M\,\me^{-\mi Ht}\,\rho_{\textnormal{init}}\,\me^{\mi Ht})$. Given $\varepsilon \in (0,1/2)$, assume $
\|\widetilde{M} - M\| \leq \tfrac{\varepsilon}{3d}, \quad
\|\widetilde{U}_i - \me^{-\mi H_it}\| \leq \tfrac{2}{15d}\varepsilon, \quad
\|\widetilde{V}_i - \me^{\mi H_it}\| \leq \tfrac{2}{15d}\varepsilon.
$ We now establish the following error bound.
    \begin{align*}
        &\|\widetilde{U}_1\widetilde{U}_2 - \me^{-\mi Ht}\| \\
        &\leq \|\widetilde{U}_1\widetilde{U}_2 - \me^{-\mi H_1t}\me^{-\mi H_2t}\|  + \| \me^{-\mi H_1t}\me^{-\mi H_2t} - \me^{-\mi Ht} \| \notag \\
       &   \leq \|\widetilde{U}_1(\widetilde{U}_2 - \me^{-\mi H_2t})\| + \|(\widetilde{U}_1 - \me^{-\mi H_1t})\me^{-\mi H_2t}\| \\
        &\phantom{ww} + \frac{t^2}{2} \|[H_1, H_2]\|\notag \\
        & \leq \|\widetilde{U}_1\|\|\widetilde{U}_2 - \me^{-\mi H_2t}\| + \|\me^{-\mi H_2t}\|\|\widetilde{U}_1 - \me^{-\mi H_1t}\| \\
        &\phantom{ww} + \frac{t^2}{2} \|[H_1, H_2]\| \notag \\
        & \leq \left(\|\widetilde{U}_1 - \me^{-\mi H_1t}\| + \|\me^{-\mi H_1t}\| \right) \cdot \frac{2}{15d}\varepsilon + \frac{2}{15d}\varepsilon \\
        &\phantom{ww}+ \frac{t^2}{2} \|[H_1, H_2]\|\notag \\
        &\leq \left( \frac{2}{15d}\varepsilon + 1\right) \cdot \frac{2}{15d}\varepsilon + \frac{2}{15d}\varepsilon + \frac{t^2}{2} \|[H_1, H_2]\| \notag \\
        &\leq \frac{1}{15d}\varepsilon + \frac{2}{15d}\varepsilon + \frac{2}{15d}\varepsilon + \frac{t^2}{2} \|[H_1, H_2]\|\notag \\
        &\leq \frac{1}{3d}\varepsilon + \frac{t^2}{2} \|[H_1, H_2]\|. 
    \end{align*}

   The third and fourth inequalities follow from the fact that $\|\me^{-\mi H_i t}\| = 1$. Similarly, We obtain $ \|\widetilde{V}_2\widetilde{V}_1 - \me^{\mi Ht}\| \leq \frac{1}{3d}\varepsilon + \frac{t^2}{2} \|[H_1, H_2]\|$. Additionally, since $\|\widetilde{U}_1\widetilde{U}_2\rho_{\textnormal{init}}\widetilde{V}_2\widetilde{V}_1\| \leq 1$, we conclude:
    \begin{align*}
        &\left|S - \Tr(O\me^{-\mi H} \rho_{\textnormal{init}} \me^{\mi H})\right| \\
        &\leq \left|\Tr\left(\left(\widetilde{M}-M\right)\widetilde{U}_1\widetilde{U}_2\rho_{\textnormal{init}}\widetilde{V}_2\widetilde{V}_1\right)\right| \\
        &\phantom{ww} + \left|\Tr\left(M\left(\widetilde{U}_1\widetilde{U}_2\rho_{\textnormal{init}}\widetilde{V}_2\widetilde{V}_1 - \me^{-\mi Ht} \rho_{\textnormal{init}} \me^{\mi Ht} \right)\right)\right|
        \notag \\
        &\leq \sum_{i=1}^d \sigma_i\left(\widetilde{M} - M\right)\sigma_i\left(\widetilde{U}_1\widetilde{U}_2\rho_{\textnormal{init}}\widetilde{V}_2\widetilde{V}_1\right) \\
        &\phantom{ww}+  \left|\Tr\left(M\left(\widetilde{U}_1\widetilde{U}_2 - \me^{-\mi Ht}\right)\rho_{\textnormal{init}}\widetilde{V}_2\widetilde{V}_1\right)\right| \notag \\
        & \phantom{ww}+ \left|\Tr\left(M \me^{-\mi Ht} \rho_{\textnormal{init}}\left(\widetilde{V}_2\widetilde{V}_1 - \me^{\mi Ht}\right)\right)\right| \notag \\
        &\leq \sum_{i=1}^d \sigma_i\left(\widetilde{M} - M\right)\sigma_i\left(\widetilde{U}_1\widetilde{U}_2\rho_{\textnormal{init}}\widetilde{V}_2\widetilde{V}_1\right) \\
        &\phantom{ww} + \sum_{i=1}^d \sigma_i\left(\widetilde{U}_1\widetilde{U}_2 - \me^{-\mi Ht}\right)\sigma_i\left(M\rho_{\textnormal{init}}\widetilde{V}_2\widetilde{V}_1\right) \notag \\
        & \phantom{ww}+ \sum_{i=1}^d \sigma_i\left(\widetilde{V}_2\widetilde{V}_1 - e^{\mi Ht}\right)\sigma_i\left(M \me^{-\mi Ht} \rho_{\textnormal{init}}\right) \notag \\
        & \leq d \cdot \frac{\varepsilon}{3d} \cdot 1 + d \cdot \left(\frac{1}{3d}\varepsilon + \frac{t^2}{2} \|[H_1, H_2]\|\right) \cdot \|M\|  \\
        &\phantom{ww}+ d \cdot \left(\frac{1}{3d}\varepsilon + \frac{t^2}{2} \|[H_1, H_2]\| \right) \cdot \|M\| \notag \\
        &\leq \varepsilon + 2 \cdot d \cdot  \frac{1}{2d}\|[H_1, H_2]\|\notag \\
        &=\|[H_1, H_2]\| + \varepsilon.
    \end{align*}
 In the second and third inequalities, we apply the von Neumann's trace inequality. In the last inequality, we use the fact that $t = O\left(\frac{1}{\sqrt{d}}\right)$.
\end{proof}

\end{document}